\newtheorem{dfn}{Definition}
\newtheorem{thm}{Theorem}
\newtheorem{lem}[thm]{Lemma}
\newcommand{\bra}[1]{\langle #1|}
\newcommand{\ket}[1]{|#1\rangle}
\newcommand{\ketbra}[2]{| #1 \rangle \langle #2 |}
\begin{document}
\title{{Fourier-based quantum signal processing}}
\author{Thais L. Silva}
\affiliation{Federal University of Rio de Janeiro, Caixa Postal 68528, Rio de Janeiro, RJ 21941-972, Brazil}
\affiliation{Quantum Research Centre, Technology Innovation Institute, Abu Dhabi,
UAE}

\author{Lucas Borges}
\affiliation{Federal University of Rio de Janeiro, Caixa Postal 68528, Rio de Janeiro, RJ 21941-972, Brazil}
\affiliation{Quantum Research Centre, Technology Innovation Institute, Abu Dhabi,
UAE}
\author{Leandro Aolita}
\affiliation{Federal University of Rio de Janeiro, Caixa Postal 68528, Rio de Janeiro, RJ 21941-972, Brazil}
\affiliation{Quantum Research Centre, Technology Innovation Institute, Abu Dhabi,
UAE}

\begin{abstract}
Implementing general functions of operators is a powerful tool in quantum computation. It can be used as the basis for a variety of quantum algorithms including matrix inversion, real and imaginary-time evolution, and matrix powers. Quantum signal processing is the state of the art for this aim, assuming that the operator to be transformed is given as a block of a unitary matrix acting on an enlarged Hilbert space. Here we present an algorithm for Hermitian-operator function design {from an oracle given by the unitary evolution with respect to that operator at a fixed time.} Our algorithm implements a Fourier approximation of the target function  based on the iteration of a basic sequence of single-qubit gates, for which we prove the expressibility. In addition, we present an efficient classical algorithm for calculating its parameters from the Fourier series coefficients. Our algorithm uses only one qubit ancilla regardless the degree of the approximating series. This contrasts with previous proposals, which required an ancillary register of size growing with the expansion degree. {Our methods are compatible with Trotterised Hamiltonian simulations schemes and hybrid digital-analog approaches.}
\end{abstract}
\maketitle

\section{\label{sec:intro}Introduction}

Quantum signal processing (QSP) was originally developed as a technique to design real-variable functions from single-qubit rotations \cite{Low2016PRX}. It has attracted a lot of attention in the last few years after the discovery that it can be extended to realize also functions of operators, initially only for Hermitian operators \cite{low2017optimal,Low2019hamiltonian} and culminating in the general  formalism of quantum singular value transformations (QSVT) \cite{gilyen2019,Martyn2021GrandUni}.  Using single-qubit rotations and having controlled black-box access to the operator to be transformed, QSVT provides a circuit structure and error control that reduces the problem of operator-function synthesis to that of real-variable function approximation,
regardless the dimension of the underlying Hilbert space, the input operator, and the  state on which the operator function is to be applied. 

Basically, the $N$-qubits operator $H$   is  embedded in a unitary operator -- the oracle -- acting on an enlarged Hilbert space. A powerful model of oracle is the so called block-encoding oracle, which has the (necessarily normalized) non-unitary operator $H$ as a block \cite{Low2019hamiltonian,gilyen2019}. The QSVT circuit applying an approximation to a target operator  function $f[H]$ is obtained by interspersing the action of the oracle with qubit rotations on an ancilla control. The particular sequence of qubit gates in each call to the oracle allows for certain achievable functions, and the specific rotation angles determine the final function implemented. In the case of block-enconding oracles,  Chebyshev series approximations of the target function have been extensively studied. The (usually non-unitary) operator function is then obtained after post-selection on the control ancilla.

Several algorithms have been proposed using QSVT with block-encoding oracles for many different tasks such as Hamiltonian simulation \cite{low2017optimal}, imaginary time evolution \cite{silva2022fragmented}, matrix inversion \cite{gribling2021improving}, to give few examples \cite{gilyen2019}. Although very versatile and, in principle, possible to implement for any normalized operator (not necessarily a Hermitian or even square matrix), this kind of oracle requires a large number of ancillas, besides requiring for QSVT a further transformation called qubitization, which uses an extra ancilla and $\mathcal{O}(N)$ extra gates in each oracle call \cite{Low2019hamiltonian}.

 Here, we assume  $H$ to be Hermitian and explore an alternative type of oracle, namely the unitary evolution oracle, in which we have black-box access to the unitary $e^{-itH}$ with adjustable time $t$.  We apply the QSP ideas to implement operator functions via their truncated Fourier expansion. Previous algorithms were already proposed that use the time evolution given by a Hermitian operator to implement functions of that operator. Remarkable examples are the HHL algorithm for matrix inversion \cite{HHL2009} (and subsequent improvements of it \cite{ambainis2012,Childs2017}), and imaginary-time evolution for preparing thermal Gibbs states \cite{chowdhury2017}. Other algorithms were also put forward for more general smooth functions of Hermitian operators \cite{vanApeldoorn2020quantumsdpsolvers,shantanav2019}.  In common all these algorithms have the fact that they rely on the implementation of the unitary evolution operator for several values of time and use the technique of linear combination of unitaries \cite{Childs2012,Berry2015a} to obtain a Fourier-like sum from it. It requires  a number of ancillas that is logarithmic with the degree of the Fourier expansion. Our algorithm is a novel QSP variant that demands Hamiltonian simulation with a fixed time. It is superior to previous Fourier-based approaches in that, remarkably, it makes use of only one ancillary qubit, which is the minimum number of qubits necessary to state-independently implement non-unitary operators. Moreover, this method  does not require qubitization \cite{Low2019hamiltonian}, which is a significant experimental simplification in view of intermediate scale implementations. The method formalizes and expands on a technique that we introduced in a  summarized fashion in Ref.\ \cite{silva2022fragmented}. There, we  applied it to the specific case of an exponential function.

Realizing the unitary $e^{-itH}$ {--} a problem known as Hamiltonian simulatio{n --} can be a hard task  in itself. In fact, it is a BQP-complete problem \cite{Osborne2012}. Nevertheless, one may for instance apply product
formulae \cite{lloyd1996universal,campbell2019,COS19} to implement the oracle with gate complexities
 that, for intermediate-scale systems, can be considerably
smaller than for block-encoding oracles \cite{silva2022fragmented}. Furthermore, the real-time evolution oracle naturally arises in  hybrid analogue-digital platforms \cite{Arrazola2016,Parra-Rodriguez2020,Gonzalez-Raya2021}, for which
QSP schemes have already been studied \cite{HamiltonianQSP21}.

The basic sequence of single-qubit gates we use to assemble the Fourier series has been proposed in a previous work \cite{PerezSalinas2021} in the context of single-qubit variational circuits for approximating real-variable functions. Here,  we formally prove that this sequence is able to implement any normalized Fourier series of a real variable. Part of the proof is a generalization of the results in Ref.\ \cite{Haah2019product}, removing the restriction over the series parity.  Moreover, we provide an efficient classical algorithm to analytically calculate the rotation angles from the target series. This removes the need  for optimizing a large number of parameters present in Ref.\ \cite{PerezSalinas2021}, a task that becomes impractical as the number of rotation angles grows with the order of the series. This single-qubit construction is the basis of our operator-valued function algorithm. The complexity of the algorithm is dominated by the truncation order of the Fourier series and by the probability of success -- the  correct operator-function is obtained after post-selecting the state of the ancilla register. On one hand, standard Fourier series present convergence issues for approximating non-periodic functions -- the so-called Gibbs phenomenon. On the other hand, methods used to improve the convergence rate might incur in a reduction of the success probability.  Here, we  compare two  techniques for obtaining  Fourier series of non-periodic functions. The first was  introduced in Ref.\ \cite{vanApeldoorn2020quantumsdpsolvers}, which obtains the Fourier series from  a power series of the target function. The second is an analytical extension of the target function \cite{Boyd2002}. While the former provides an analytical bound for the error of the approximation, the latter has the advantage of avoiding a decrease of the success probability.

The paper is organized as follows: in Sec.\ \ref{sec:Preliminaries} we give some preliminary definitions and formally establish the problem we will tackle. In Sec.\ \ref{sec:results} we present our results, starting with the construction for real-variable function approximation using qubit rotations in Sec.\ \ref{sec:realvar}, which is extended to operator function design in Sec.\ \ref{sec:QSP_RTE}.  In Secs.\ \ref{sec:boundedError} and \ref{sec:periodicExt} we discuss two alternatives for obtaining Fourier approximations of non-periodic functions and compare them for the case of imaginary-time evolution. The proofs of the theorems and lemmas are left to Sec.\ \ref{sec:proofs}. Finally, we discuss our results in Sec.\ \ref{sec:Discussions}. 

\section{Preliminaries}\label{sec:Preliminaries}

From calls to a block-encoding oracle for a multi-qubit Hermitian operator $H$, QSP yields an operator which has the same eigenvectors as $H$, but eigenvalues transformed by a polynomial function.   This requires the use of qubit ancillas on which a sequence of  rotations is applied. The basic QSP circuit has the oracle as an input and the function being realized solely depends on the values of the parameters for the ancilla rotations. The achievable functions can be determined by analysing an $SU(2)$ operator as a function of a single real variable. By using the ancillary system to control the action of the operator oracle, it is possible to promote a real-variable function $f$ to an operator function $f[H]$ by applying $f$ to each eigenvalue of $H$. QSVT generalizes this procedure to include non-Hermitian operators $H$. In this case, eigenvalues and eigenvectors should be substituted by singular values and singular vectors.

We consider an $N$-qubit system $\mathcal{S}$, with Hilbert space $\mathbb H_\mathcal{S}$. A Hamiltonian operator $H$ on $\mathbb H_\mathcal{S}$ with eigenvalues $\lambda\in[\lambda_{\rm{min}},\lambda_{\rm{max}}]$ is given by a unitary oracle $O$. For simplicity, we assume $\lambda_{\rm{min}}\geq-1$ and $\lambda_{\rm{max}}\leq1$, such that $\|H\|\leq 1$. Contrary to other oracle types, for which the normalization is mandatory \cite{Low2019hamiltonian}, here it is merely a convenience. Our goal is to approximately implement an operator function $f[H]=\sum_\lambda f(\lambda) \ketbra{\lambda}{\lambda}$ on $\mathbb{H}_{\mathcal{S}}$ from calls to $O$, where $f:\mathbb{R}\mapsto \mathbb{C}$ and $\ket{\lambda}$ is the eigenvector of $H$ corresponding to eigenvalue $\lambda$. In general, $f[H]$ is not unitary and its implementation via a unitary circuit is achieved by enlarging the Hilbert space to include an ancillary register $\mathcal{A}$, whose Hilbert space we denote by $\mathbb H_\mathcal{A}$. 
We denote by $\mathbb H_\mathcal{SA}:= \mathbb H_\mathcal{S}\otimes\mathbb H_\mathcal{A}$ the joint Hilbert space of $\mathcal{S}$ and $\mathcal{A}$, and by $\left\Vert A \right\Vert$ the spectral norm of an operator $A$.  The oracle model that we consider is formally defined below. It encodes $H$ through the real-time 
unitary evolution it generates for a fixed but tunable time value. 
\begin{dfn}
\label{def:real_t_or} (Real-time evolution Hamiltonian oracle). We refer as a real-time evolution oracle for a Hamiltonian $H$ on $\mathbb H_\mathcal{S}$ at a time $t\in\mathbb{R}$ to a controlled-$e^{-itH}$ gate  {$O=\mathds{1}\otimes\ket{0}\bra{0}+e^{-itH}\otimes\ket{1}\bra{1}$} on $\mathbb H_\mathcal{SA}$.
\end{dfn}

 The target operator function is  obtained via post-selection on the ancilla state after aplying a unitary operator $U_{f[H]}$ on $\mathbb H_\mathcal{SA}$ that encodes $f[H]$ in one of its matrix blocks {\cite{Low2019hamiltonian, gilyen2019, shantanav2019}}. 
$U_{f[H]}$ is called a block-enconding of $f[H]$, as defined below for a general operator $A$ on $\mathbb H_\mathcal{S}$ \cite{silva2022fragmented}:
\begin{dfn} (Block encodings). For sub-normalization $0\leq\alpha\leq 1$ and tolerated error $\varepsilon> 0$, a unitary operator $U_A$ on $\mathbb H_\mathcal{SA}$
is an $(\varepsilon,\alpha)$-block-encoding of a linear operator $A$ on $\mathbb H_\mathcal{S}$ if $\left\Vert \alpha\,A-\bra{0}\,U_A\,\ket{0}\right\Vert \leq\,\varepsilon$,
for some $\ket{0}\in\mathbb{H}_\mathcal{A}$. 
For $\varepsilon=0$ and $(\varepsilon,\alpha)=(0,1)$ we use the short-hand terms perfect $\alpha$-block-encoding  and perfect block-encoding, respectively. 
\label{def:block_enc}
\end{dfn} 

The need for a subnormalization comes from the unitarity of $U_A$, whose blocks must therefore necessarily have spectral norm below unit. If the system and the ancilla are prepared in the joint state $\ket{\Psi}\ket{0}$, then the target operator $A$ on $\mathbb H_\mathcal{S}$ is obtained after measuring the ancilla whenever its state is projected onto $\ket{0}$ successfully. This happens with a success probability given by $p_\Psi(\alpha A)=\|\alpha A\ket{\Psi}\|^2$. Therefore, the larger the subnormalization, the best for the algorithm, since it will have a higher probability of getting a successful run. 

In the usual QSP method, the target function $f(\lambda)$ is approximated by a finite-order power series $\tilde{f}(\lambda)$ such that $|f(\lambda)-\tilde{f}(\lambda)|\leq \varepsilon$ for all $\lambda\in[\lambda_{\rm{min}},\lambda_{\rm{max}}]$. This ensures that the operator-function spectral-norm error is bounded by the same $\varepsilon$. Similarly, here we use a truncated Fourier series to approximate the target function. The standard measure of complexity used in oracle based algorithms is the number of calls to the oracle needed to implement an approximation of $f[H]$ up to spectral error $\varepsilon$, which we denote by  $q_f(\varepsilon)$. As we show later, $q_f(\varepsilon)$ equals double the truncation order of the approximating series.  In the next section, we show how it is possible to generate an arbitrary Fourier series from single-qubit rotations. The exact same construction can be used to implement functions at the level of the eigenvalues of an operator encoded in the oracle of Def.\ \ref{def:real_t_or}. This is due to the fact that  each eigenvalue $\lambda$ of $H$ is naturally attributed a two-dimensional subspace spanned by $\{\ket{\lambda}\ket{0},\ket{\lambda}\ket{1}\}$.

\section{Results}
\label{sec:results}

\subsection{Single-qubit rotation synthesis of Fourier series}
\label{sec:realvar}

Consider a Fourier series $\tilde{g}_q(x)=\sum_{m=-q/2}^{q/2} c_m\, e^{im x}$ such that $|g(x)-\tilde{g}_q(x)|\leq \varepsilon_0$ for all $x\in[-\pi,\pi]$, where $g:\mathbb{R}\rightarrow\mathbb{C}$ is such that $|g(x)|\leq 1$ on the interval $[-\pi,\pi]$. The function $g$ will serve to approximate the target $f$ in a reduced interval as to avoid the so called Gibbs phenomenon, as we explain later. For now, the important thing is that we would like to build a unitary single-qubit operator
\begin{equation}\label{eq:single_U}
 U_{\tilde{g}_q\tilde{h}_q}(x) = \left(\begin{array}{cc}
    \tilde{g}_q(x)&\tilde{h}_q(x)\\
    -\tilde{h}^*_q(x)&\tilde{g}^*_q(x)
   \end{array}\right)
\end{equation}
having $\tilde{g}_q(x)$ as one of its entries with an arbitrary Fourier series  $\tilde{h}_q(x)$ of order $q/2$ as the complementary entry. The operator should be obtained by repeatedly applying a basic gate $R(x,\boldsymbol{\xi})$ with the variable $x$ as input, such that,  for a convenient choice of parameters $\boldsymbol{\xi}_k$,  $U_{\tilde{g}_q\tilde{h}_q}(x) = \prod_{k=0}^{q}R(x,\boldsymbol{\xi}_k)$ for any $x\in [-\pi,\pi]$ given as input.

Inspired by a construction in Ref.\ \cite{PerezSalinas2021}, here we consider the basic single-qubit gate  $R(x,\omega,\boldsymbol{\xi})=e^{i\frac{\zeta+\eta}{2} Z}e^{-i\varphi Y}e^{i\frac{\zeta-\eta}{2} Z}e^{i\omega x Z}e^{-i\kappa Y}$, which has five adjustable parameters $\{\omega,\boldsymbol{\xi}\}\in\mathbb{R}^5$, where $\boldsymbol{\xi}:=\{\zeta,\eta,\varphi,\kappa\}$.  $X$, $Y$ and $Z$ are the Pauli operators. The input variable $x\in\mathbb{R}$ is the signal to be processed and $e^{i \omega x Z}$ is called iterate. In Ref. \cite{PerezSalinas2021}, it was observed that the gate sequence $\mathcal{R}(x,\boldsymbol{\omega},\boldsymbol{\Phi}):=\prod_{k=0}^{q}R(x,\omega_k,\boldsymbol{\xi}_k)$, with $\boldsymbol{\omega}:=\{\omega_0,\cdots, \omega_q\}\in\mathbb{R}^{q+1}$ and $\boldsymbol{\Phi}=\{\boldsymbol{\xi}_0,\cdots,\boldsymbol{\xi}_{q}\}\in\mathbb{R}^{4(q+1)}$, can encode certain finite Fourier series into its matrix components. There, the authors numerically find the sequence of pulses for some examples of real functions. However, they do not prove the existence of a complementary Fourier series $\tilde{h}_q(x)$, nor provide an analytical mean of calculating the pulses, relying on numerical optimizations of the rotation parameters and thus limiting the order of the actual implementable series. Here, not only do we formally prove that $U_{\tilde{g}_q\tilde{h}_q}(x)=\mathcal{R}(x,\boldsymbol{\omega},\boldsymbol{\Phi})$ can encode any target series but also we provide an explicit, efficient recipe for finding the adequate choice of pulses $\boldsymbol{\Phi}$. This is the content of the following theorem, whose proof can be found in Sec.\ \ref{sec:proofRV}. A circuit representation of $\mathcal{R}(x,\boldsymbol{\omega},\boldsymbol{\Phi})$ can be found in Fig.\ \ref{fig:generalf}.a).
\begin{thm}\label{lem:qsp2}[Single-qubit Fourier series synthesis]
 Given $\tilde{g}_q(x)=\sum_{m=-q/2}^{q/2} c_m\, e^{im x}$, with $q\in\mathbb{N}$ even, there exist $\boldsymbol{\omega}$ and  $\boldsymbol{\Phi}$ such that $\bra{0}\,\mathcal{R}(x,\boldsymbol{\omega},\boldsymbol{\Phi})\,\ket{0}=\tilde{g}_q(x)$ for all $|x|\leq \pi$ iff $|\tilde{g}_q(x)|\leq 1$ for all $|x|\leq\pi$. Moreover, $\boldsymbol{\omega}$ can be taken such that  $\omega_0=0$ and $\omega_k=(-1)^k/2$, for all $1\leq k\leq q$, and $\boldsymbol{\Phi}$ can be calculated classically from $\{c_m\}_{m}$ in time $\mathcal{O}\left(\text{poly}(q)\right)$.
\end{thm}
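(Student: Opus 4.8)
\emph{Proof plan.} The ``only if'' direction is immediate: $\mathcal{R}(x,\boldsymbol{\omega},\boldsymbol{\Phi})$ is a product of unitaries, hence unitary for every $x$, so $\big|\bra{0}\,\mathcal{R}(x,\boldsymbol{\omega},\boldsymbol{\Phi})\,\ket{0}\big|\le1$; if this matrix element equals $\tilde g_q(x)$ on $[-\pi,\pi]$ then necessarily $|\tilde g_q|\le1$ there. For the ``if'' direction the plan is to pass to the variable $z:=e^{ix/2}$, under which the prescribed iterates $e^{i\omega_k xZ}$ with $\omega_0=0$ and $\omega_k=(-1)^k/2$ become $W_0=\mathds{1}$ and $W_k=\mathrm{diag}\!\big(z^{(-1)^k},z^{-(-1)^k}\big)$. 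Writing the $x$-independent part of $R(x,\omega_k,\boldsymbol{\xi}_k)$ as a generic $ZYZ$ rotation in $SU(2)$ together with the trailing $e^{-i\kappa_k Y}$, and collecting the fixed factors that sit between consecutive iterates, one checks that as $\boldsymbol{\Phi}$ varies $\mathcal{R}(x,\boldsymbol{\omega},\boldsymbol{\Phi})$ realizes exactly the family
\[
 \mathcal{F}_q:=\big\{\,A_q W_q A_{q-1}W_{q-1}\cdots A_1 W_1 A_0\ :\ A_0,\dots,A_q\in SU(2)\,\big\},
\]
because the $q+1$ ``processing'' slots are fed disjoint parameter sets each of which surjects onto $SU(2)$ (the trailing $Y$-rotation of a gate merges with the $ZYZ$ block of its neighbour). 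Since the $W_k$ are unitary on $|z|=1$, each $M\in\mathcal{F}_q$ has the form $M=\left(\begin{smallmatrix}P&Q\\-Q^{*}&P^{*}\end{smallmatrix}\right)$ with $|P|^{2}+|Q|^{2}=1$ there, $P^{*}$ denoting the Laurent polynomial obtained from $P$ by $z\mapsto z^{-1}$ and complex conjugation of coefficients; and an easy induction on the number of iterates shows $P,Q$ have degree $\le q$ and are supported on monomials $z^{j}$ with $j\equiv q\pmod2$. For $q$ even this says precisely that $P$ and $Q$ are Fourier series $\sum_{m=-q/2}^{q/2}(\cdot)e^{imx}$; both sides of the target identity are then $2\pi$-periodic, so agreement on $[-\pi,\pi]$ is equivalent to agreement everywhere.

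\emph{Complementary Fourier series.} Next, given $\tilde g_q$ with $|\tilde g_q(x)|\le1$ for all $x$, I would invoke the trigonometric Fej\'er--Riesz theorem on the non-negative trigonometric polynomial $1-|\tilde g_q(x)|^{2}$ (of degree $\le q$) to write it as $|\tilde h_q(x)|^{2}$ for a Fourier series $\tilde h_q$ of order $\le q/2$ --- concretely, by factoring the associated algebraic polynomial of degree $\le2q$ over its roots and regrouping them. Then $M:=\left(\begin{smallmatrix}\tilde g_q&\tilde h_q\\-\tilde h_q^{*}&\tilde g_q^{*}\end{smallmatrix}\right)$ obeys $MM^{\dagger}=\mathds{1}$ and $\det M=1$ pointwise, so $M(z)\in SU(2)$ on $|z|=1$, its entries are Fourier series of order $\le q/2$, and $\bra{0}M\ket{0}=\tilde g_q$. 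It therefore only remains to show that such an $M$ lies in $\mathcal{F}_q$.

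\emph{Layer stripping and the classical algorithm.} This last step is also the recipe for $\boldsymbol{\Phi}$. The plan is an induction on $n$: every $N=\left(\begin{smallmatrix}P&Q\\-Q^{*}&P^{*}\end{smallmatrix}\right)$ that is unitary with unit determinant on $|z|=1$ and has $P,Q$ of degree $\le n$ and parity $n\bmod2$ belongs to $\mathcal{F}_n$. The case $n=0$ is trivial. For the step, read off the coefficients of $z^{\pm n}$ in $P$ and $Q$; the vanishing of the $z^{\pm2n}$ coefficients of $|P|^{2}+|Q|^{2}\,(=1)$ forces the $2\times2$ matrix of these extreme coefficients to be singular, which is exactly the condition letting one choose $A_n\in SU(2)$ (its relevant row being a null vector of that matrix, with the residual phase absorbed into the next layer) so that $W_n^{-1}A_n^{-1}N$ has entries of degree $\le n-1$; this matrix is still unitary with unit determinant on the circle and has parity $(n-1)\bmod2$, so the hypothesis applies to it and yields $N=A_nW_nA_{n-1}W_{n-1}\cdots A_0$. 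Running this $q$ times on the $M$ above produces $A_0,\dots,A_q$, and translating each $A_k$ to its $ZYZ$ Euler angles gives the pulses $(\zeta_k,\eta_k,\varphi_k,\kappa_k)$, with $\boldsymbol{\omega}$ as claimed. Forming $1-|\tilde g_q|^{2}$, the Fej\'er--Riesz root finding on a degree-$O(q)$ polynomial, the $q$ stripping steps (each a $2\times2$ null-space computation plus multiplications of degree-$O(q)$ Laurent polynomials by constants and by $z^{\pm1}$), and the final angle extractions are all $\mathrm{poly}(q)$, giving the stated bound.

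\emph{Where the difficulty lies.} The crux is the layer-stripping induction --- showing $\mathcal{F}_n$ is \emph{exactly} the circle-unitary, unit-determinant pairs $(P,Q)$ of the right degree and parity, i.e.\ that the degree reduction can always be performed. This rests on the automatic singularity of the extreme-coefficient matrix, and it is here that one must extend Ref.~\cite{Haah2019product}: because the interspersed gates are now full $SU(2)$ rotations rather than diagonal phase gates, the construction is not confined to a single parity sector, and one has to verify that this extra rotational freedom is consistent with, and exactly sufficient for, reducing both the top and the bottom degree at each layer. A secondary and more tedious point is the bookkeeping of the first step: confirming that the particular five-parameter gate $R$, with its specific placement of $Y$-rotations, assembles the full family $\mathcal{F}_q$ with independent $SU(2)$ factors and no leftover constraint at the endpoints, and that the inverse map from each $A_k$ to $(\omega_k,\boldsymbol{\xi}_k)$ is well defined.
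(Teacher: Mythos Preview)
Your proposal is correct and follows essentially the same route as the paper: construct the complementary series $\tilde h_q$ via Fej\'er--Riesz factorization (which the paper proves from scratch as Lemma~\ref{lem:complementary} by pairing the roots of $1-|\tilde g_q|^2$), then peel off the gates one layer at a time, using that the highest-frequency term of $|\tilde g_q|^2+|\tilde h_q|^2=1$ forces the matrix of extreme coefficients to be singular. Your abstraction to the family $\mathcal{F}_q=\{A_qW_q\cdots W_1A_0:A_k\in SU(2)\}$ is a clean repackaging, but the underlying mechanism---and the explicit angle formulas it yields---coincide with the paper's direct computation in the specific $R(x,\omega,\boldsymbol{\xi})$ parametrization.
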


The power of Theo.\ \ref{lem:qsp2} relies on that it allows for the implementation of any complex Fourier series, assuming only that it is properly normalized. Moreover, as is usual in QSP, the number of pulses is determined solely by the order of the implemented series. Now, as a Fourier series can be used to approximate any complex function $g(x)$ on a finite interval, we have a way to approximately obtain any complex function from qubit rotations. Moreover, the error of the approximant is exactly the error made in the series approximation, that is $\varepsilon=\max_{x\in[-\pi,\pi]}|g(x)-\tilde{g}_q(x)|$. Notice that if it is guaranteed that $|g(x)|\leq 1$, then at most $|\tilde{g}_q(x)|\leq1+\varepsilon$ and the normalized series $\tilde{g}_q(x)/(1+\varepsilon)$ satisfies Theo.\ \ref{lem:qsp2} with error of at most $2\varepsilon$.

The proof presented in Sec.\ \ref{sec:proofRV} follows a similar strategy to that of Refs.\ \cite{Haah2019product,chao2020finding}. The first step   is to prove the existence of $\tilde{h}_q(x)$, which leads to a method to obtain it from the coefficients of $\tilde{g}_q(x)$.   It is easy to see that a unitary operator can always be built from $\tilde{g}_q(x)$ by taking $\tilde{h}_q(x)=\sqrt{1-|\tilde{g}_q(x)|^2}$. However, this choice is not unique and it is not obvious that $\tilde{h}_q(x)$ could also be chosen as a Fourier series with the same order as $\tilde{g}_q(x)$, neither it is obvious how to obtain this complement. The next step  is to show that indeed the sequence of pulses produce all the frequencies of the Fourier series. Finally, this two elements are combined in  a constructive method to show that $U$ in Eq.\ \eqref{eq:single_U} can be obtained as the sequence of gates $\mathcal{R}(x,\boldsymbol{\omega},\boldsymbol{\Phi})$. It directly furnishes a way to calculate $\boldsymbol{\Phi}$ from $\tilde{g}_q(x)$.


\subsection{Operator function design}
\label{sec:QSP_RTE}

Here, we elaborate on the technique introduced in Sec. V-C of Ref.\ \cite{silva2022fragmented}. We synthesize an ($\varepsilon,\alpha$)-block-encoding of $f[H]$ from an oracle for $H$ as in Def. \ref{def:real_t_or}. We build  a circuit $\mathcal{C}$ generating a perfect block-encoding $\boldsymbol{V}_{\boldsymbol{\Phi}}$ of a target Fourier expansion $\tilde{g}_q[Ht]:=\sum_{m=-q/2}^{q/2} c_m e^{imHt}$ that $\varepsilon$-approximates $\alpha\, f[H]$, for some $0<\varepsilon<1$,        $\alpha\leq 1$, and a suitable $t>0$. This is done by adjusting $\boldsymbol{\Phi}$ according to Theorem \ref{lem:qsp2}. Similarly, an algorithm for Fourier synthesis could be devised from the basic single-qubit gate $R(x,\phi)=e^{ixX}e^{i\phi Z}$, the usual pulse sequence used in QSP \cite{Low2016PRX}. However,  it would require one extra qubit ancilla and decrease the subnormalization by half (meaning a decrement of the success probability by $1/4$), as we show in App.\ \ref{app:proofFourier}. 

The function $\tilde{g}_q$ is taken as a Fourier approximation of an intermediary function $g$ such that $|g(x)-\tilde{g}_q(x)|\leq\varepsilon_0$ $\forall x\in[-\pi,\pi]$, and $|g(x_{\lambda})-\alpha\, f(\lambda)|\leq \varepsilon_1$ for all $x_{\lambda}:=\lambda\, t\in[-x_0,x_0]$, with  $x_0\leq \pi$ and $\varepsilon_0+\varepsilon_1\leq\varepsilon$. Since we are interested in approximating $f(\lambda)$ for all $\lambda\in[-1,1]$, we can take $t=x_0$ so that $x_{\lambda}=\lambda\, t$ is in $[-x_0,x_0]$ for all the eigenvalues of $H$. The function $g$ is, up to $\varepsilon_1$-approximation, a periodic extension of $f$.
The reason for this intermediary step here is to circumvent the well-known Gibbs phenomenon, by virtue of which convergence of a Fourier expansion cannot in general be guaranteed at the boundaries. In turn, the sub-normalization factor $\alpha$ might take a value strictly smaller than $1$ even if $|f(\lambda)|\leq 1$ in $[-1,1]$ because our $\tilde{g}_q$ converges to $\alpha f$ only for $|x_{\lambda}|\leq x_0$, whereas Theorem \ref{lem:qsp2} requires that $|\tilde{g}_q(x_{\lambda})|\leq 1$ for all $|x_{\lambda}|\leq \pi$. This forces one to sub-normalize the expansion so as to guarantee normalization over the entire domain. (The inoffensive sub-normalization factor $(1+\varepsilon)^{-1}$ is neglected.)

\begin{figure}[t!]
\centering
\includegraphics[width=1\columnwidth]{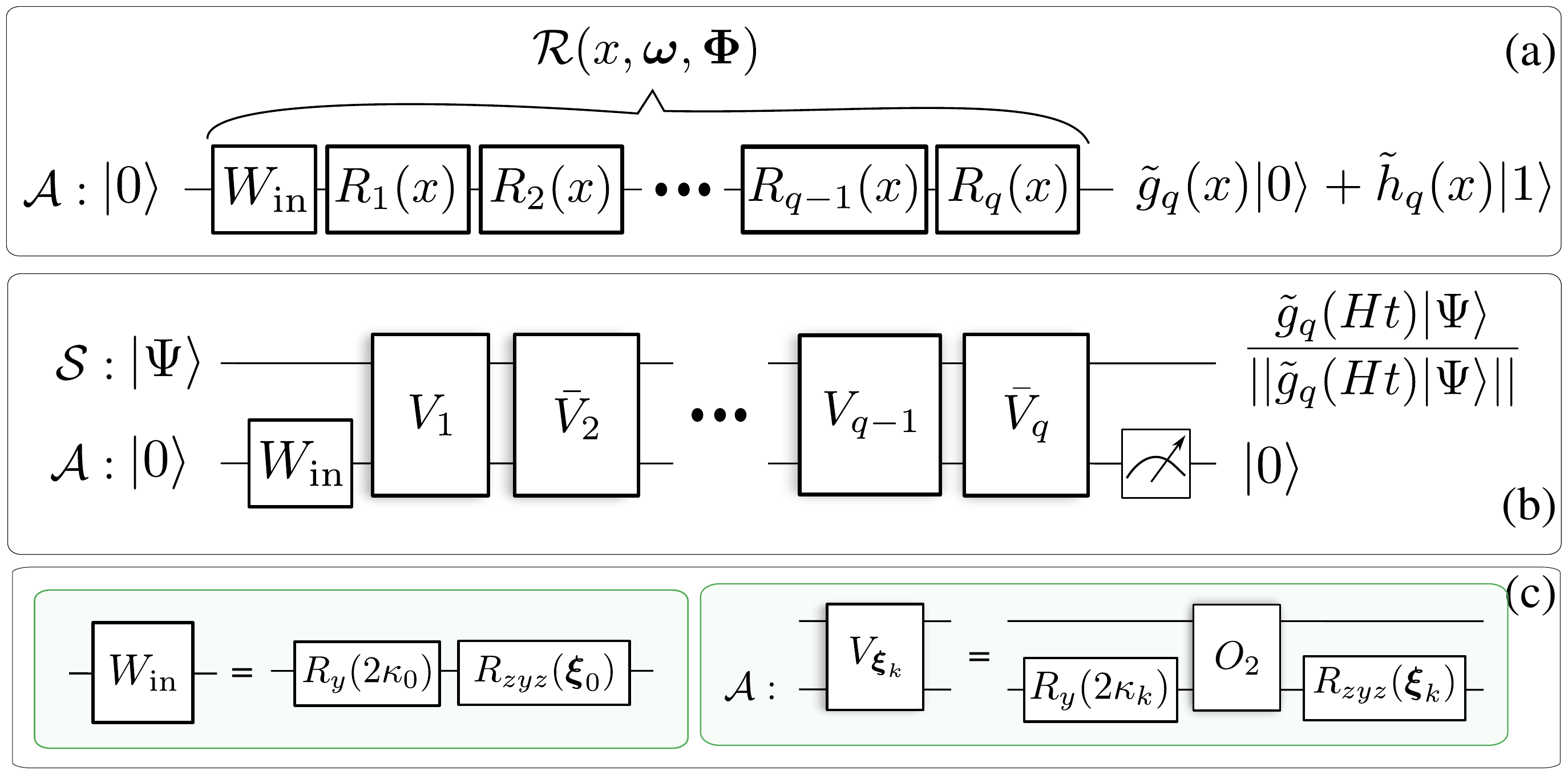}
\caption{\label{fig:generalf} \textbf{Circuit for generic operator-function design.} a) Single-qubit circuit implementing the unitary transformation $\mathcal{R}(x,\boldsymbol{\omega},\boldsymbol{\Phi})$ which contains $\tilde{g}_q(x)$ as a block. Here we use the short-hand notation $R_k(x):=R(x,\omega_k,\xi_k)$. b) The circuit  $\mathcal{C}$ from Alg. \ref{alg:frealtime} uses the register $\mathcal{A}$ as an ancilla. If the ancilla is initialised and post-selected in $\ket{0}$, the circuit prepares the system state $\frac{\tilde{g}_q[Ht]\ket{\Psi}}{||\tilde{g}_q[Ht]\ket{\Psi}||}$, which $\varepsilon$-approximates the target output $\frac{f[H]\ket{\Psi}}{\|f[H]\ket{\Psi}\|}$. 
The detailed components of  $\mathcal{C}$ are  shown in part c). $W_{\text{in}}$ is a fixed ancillary unitary. The basic blocks $V_k$ in panel b) represent the gates $V_{\boldsymbol{\xi}_k}$ in c).
Each $V_{\boldsymbol{\xi}_k}$ involves one query to the oracle $O$. $\bar{V}_k$ is defined as $V_k$ but with  $O^{\dagger}$ substituting $O$. Hence, the query complexity of $\mathcal{C}$ is  $q$. The approximating function $\tilde{g}_q$ is determined by the rotation angles $\boldsymbol{\Phi}=\{\boldsymbol{\xi}_0,\cdots,\boldsymbol{\xi}_{q}\}$. $R_\alpha(\phi)$ with $\alpha=x,y,z$ denotes a rotation in the $\alpha$ direction and $R_{zyz}(\boldsymbol{\xi}_k):=R_z(\zeta_k+\eta_k)\,R_y(2\varphi_k)\,R_z(\zeta_k-\eta_k)$, with $\boldsymbol{\xi}_k=\{\zeta_k,\eta_k,\varphi_k,\kappa_k\}$. These angles are chosen such that $\tilde{g}_q$ is a high-precision Fourier approximation of  $f$ in the interval $[-1,1]$.
}
 \end{figure}

Next, we explicitly show how to generate $\boldsymbol{V}_{\boldsymbol{\Phi}}$. We define  $\boldsymbol{V}_{{\boldsymbol{\Phi}}}:=\left(
 \bar{V}_{\boldsymbol{\xi}_{q}}{V}_{\boldsymbol{\xi}_{q-1}}\cdots \bar{V}_{\boldsymbol{\xi}_{2}}{V}_{\boldsymbol{\xi}_{1}} \right)W_{\text{in}}$, where the basic QSP blocks are given as
\begin{subequations}
 \begin{equation}
 \label{eq:vphiRT}
  {V}_{\boldsymbol{\xi}_k}:=\left[\mathds{1}\otimes\left(e^{i\frac{\zeta_k+\eta_k}{2} Z}e^{-i\varphi_k Y}e^{i\frac{\zeta_k-\eta_k}{2} Z}\right)\right]O \left[\mathds{1}\otimes e^{-i\kappa_k Y}\right]
 \end{equation}
 and 
 \begin{equation}\label{eq:vphibarRT}
  \bar{V}_{\boldsymbol{\xi}_k}:=\left[\mathds{1}\otimes\left(e^{i\frac{\zeta_k+\eta_k}{2} Z}e^{-i\varphi_k Y}e^{i\frac{\zeta_k-\eta_k}{2} Z}\right)\right]O^\dagger \left[\mathds{1}\otimes e^{-i\kappa_k Y}\right],
 \end{equation}
\end{subequations}
with ${\boldsymbol{\xi}_k}:=\{\zeta_k,\eta_k,\varphi_k,\kappa_k\}$. 
$V_{\boldsymbol{\xi}_k}$ and $\bar{V}_{\boldsymbol{\xi}_k}$ play a similar role to $R(x,\omega_k,\boldsymbol{\xi}_k)$ in Sec.\,\ref{sec:realvar}. The iterate is taken as the oracle: $O=\mathds{1}\otimes\ket{0}\bra{0}+ e^{-iHt}\otimes \ket{1}\bra{1}$ (with {$x_\lambda$} inside $O$ playing the role of $x$ in Sec.\,\ref{sec:realvar} for each $\lambda$).
Notice that $O$ readily acts as an $SU(2)$ rotation on each 2-dimensional subspace $\text{span}\{\ket{\lambda}\ket{0},\ket{\lambda}\ket{1}\}$. Here, qubitization \cite{Low2019hamiltonian} is not required.  We take
\begin{equation}
\label{eq:WinRT}
W_{\text{in}}=\mathds{1}\otimes\big[e^{i\frac{\zeta_0+\eta_0}{2} Z}e^{-i\varphi_0 Y}e^{i\frac{\zeta_0-\eta_0}{2} Z}e^{-i\kappa_0 Y}\big].
\end{equation}
The following pseudocode summarizes the entire procedure to implement $\boldsymbol{V}_{\boldsymbol{\Phi}}$.

\begin{center}
{\setlength{\fboxsep}{1pt}

\begin{minipage}[t]{0.9\columnwidth}
\centering
\begin{algorithm}[H]\label{alg:frealtime}
\SetAlgoLined
\SetAlgorithmName{Algorithm}{Algorithm}{Algorithm}
\caption{Operator-valued Fourier series from real-time evolution Hamiltonian oracles}
\SetKwInOut{Input}{input}\SetKwInOut{Output}{output}
\SetKwData{Even}{even}

\Input{Fourier coefficients $\boldsymbol{c}:=\{c_m\}_{|m|\leq q/2}$, oracle $O$ for $H$ at a time $t=x_0$, and its inverse $O^{\dagger}$.}
\Output{unitary quantum circuit $\mathcal{C}$.} 
\BlankLine
calculate the rotation angles $\boldsymbol{\Phi}$ (Sec.\ \ref{sec:proofRV})\;
\Begin(construction of $\mathcal{C}$:){
apply $W_{\text{in}}$ from Eq.\ \eqref{eq:WinRT} on $\mathcal{A}$\;
\For{$k= 1$ \KwTo $k=q$}{
   \lIf{$k$ is odd}{apply $V_{\boldsymbol{\xi}_k}$ from Eq.\ \eqref{eq:vphiRT}} %
   \lElse{appy $\bar{V}_{\boldsymbol{\xi}_k}$ from Eq.\ \eqref{eq:vphibarRT}}
}
}
\end{algorithm}
\end{minipage}
}
\end{center}

For each eigenvalue $\lambda$ of  $H$, the operator $\boldsymbol{V}_{{\boldsymbol{\Phi}}}$ realizes  the rotation operator $ \mathcal{R}(x_\lambda,\boldsymbol{\omega},\boldsymbol{\Phi})$  on the ancillary register $\mathcal{A}$, as in Theorem \ref{lem:qsp2}. Therefore, the choice of $\boldsymbol{\Phi}$  allows us to block-encode any normalized Fourier series of $H$. This is the content of the next theorem. The circuit realizing $\boldsymbol{V}_{{\boldsymbol{\Phi}}}$ is depicted in Figs. \ref{fig:generalf}.b) and \ref{fig:generalf}.c).

\begin{thm}\label{fourier} (Operator-valued Fourier series from real-time evolution oracles) Let $\tilde{g}_q:[-\pi,\pi]\rightarrow\mathbb{C}$, with  $\tilde{g}_q(x)=\sum_{m=-q/2}^{q/2} {c}_m e^{i m x}$, be such that $|\tilde{g}_q(x)|\leq1$ for all $x\in[-\pi,\pi]$. Then there is a pulse sequence ${\boldsymbol{\Phi}}=\{ {\boldsymbol{\xi}}_0, \cdots,{\boldsymbol{\xi}}_{q} \}\in\mathbb{R}^{4(q+1)}$, with ${\boldsymbol{\xi}}_k=\{\zeta_k,\eta_k,\varphi_k,\kappa_k\}$, such that the operator $\boldsymbol{V}_{{\boldsymbol{\Phi}}}$ on $\mathbb{H}_{\mathcal{SA}}$ implemented by Alg. \ref{alg:frealtime} is a perfect block-encoding of $\tilde{g}_q[Ht]$, i.{e.} 
\begin{equation}
\bra{0}\boldsymbol{V}_{{\boldsymbol{\Phi}}}\ket{0}=\sum_{\lambda}\tilde{g}_q(x_\lambda)\,\ket{\lambda}\bra{\lambda},\label{eq:rtrealfunction}
\end{equation}
with $x_\lambda=\lambda t $. 
Moreover, the pulse sequence can be obtained classically in time $\mathcal{O}(\text{poly}(q))$.
\end{thm}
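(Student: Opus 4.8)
The plan is to reduce Theorem~\ref{fourier} to the single-qubit statement of Theorem~\ref{lem:qsp2} by exhibiting an exact correspondence between the operator $\boldsymbol{V}_{\boldsymbol{\Phi}}$ produced by Alg.~\ref{alg:frealtime} and a direct sum, over the eigenspaces of $H$, of the single-qubit rotation sequences $\mathcal{R}(x_\lambda,\boldsymbol{\omega},\boldsymbol{\Phi})$ with the specific choice $\omega_0=0$, $\omega_k=(-1)^k/2$. First I would diagonalize $H=\sum_\lambda\lambda\,\ketbra{\lambda}{\lambda}$ and observe that the joint Hilbert space $\mathbb{H}_{\mathcal{SA}}$ decomposes as $\bigoplus_\lambda \mathrm{span}\{\ket{\lambda}\ket{0},\ket{\lambda}\ket{1}\}$, a direct sum of two-dimensional (generically) invariant subspaces. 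The oracle $O=\mathds{1}\otimes\ketbra{0}{0}+e^{-iHt}\otimes\ketbra{1}{1}$ acts on the $\lambda$-block as $\mathrm{diag}(1,e^{-i\lambda t})=\mathrm{diag}(1,e^{-i x_\lambda})$, which up to a global phase $e^{-ix_\lambda/2}$ equals $e^{i(x_\lambda/2)Z}$ — precisely the iterate $e^{i\omega x Z}$ of Sec.~\ref{sec:realvar} evaluated at $x=x_\lambda$ and $\omega=1/2$. Similarly $O^\dagger$ restricted to the $\lambda$-block is $e^{-i(x_\lambda/2)Z}$ up to phase, realizing $\omega=-1/2$. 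Since the single-qubit $Y$- and $Z$-rotations in $V_{\boldsymbol{\xi}_k}$, $\bar V_{\boldsymbol{\xi}_k}$ and $W_{\mathrm{in}}$ act as $\mathds{1}\otimes(\cdots)$, they act identically and block-diagonally on every $\lambda$-block.

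The key step is then a bookkeeping one: I would check that, block by block, the product $\bar V_{\boldsymbol{\xi}_q}V_{\boldsymbol{\xi}_{q-1}}\cdots\bar V_{\boldsymbol{\xi}_2}V_{\boldsymbol{\xi}_1}W_{\mathrm{in}}$ reproduces $\prod_{k=0}^q R(x_\lambda,\omega_k,\boldsymbol{\xi}_k)$ with $\omega_0=0$ (the $W_{\mathrm{in}}$ term, which carries no oracle and hence no iterate), $\omega_k=+1/2$ for odd $k$ (the $V_{\boldsymbol{\xi}_k}$ blocks, built from $O$), and $\omega_k=-1/2$ for even $k\geq 2$ (the $\bar V_{\boldsymbol{\xi}_k}$ blocks, built from $O^\dagger$) — matching the alternating sign pattern $(-1)^k/2$ of Theorem~\ref{lem:qsp2}. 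The only subtlety is the accumulated global phase from writing $\mathrm{diag}(1,e^{-ix_\lambda})=e^{-ix_\lambda/2}e^{i(x_\lambda/2)Z}$ for each of the $q$ oracle calls: the $q/2$ factors of $O$ contribute phases $e^{-ix_\lambda/2}$ and the $q/2$ factors of $O^\dagger$ contribute $e^{+ix_\lambda/2}$, so these cancel in pairs and $\boldsymbol{V}_{\boldsymbol{\Phi}}$ restricted to the $\lambda$-block equals $\mathcal{R}(x_\lambda,\boldsymbol{\omega},\boldsymbol{\Phi})$ on the nose (I would double-check the pairing works for the chosen odd/even assignment; if a residual phase survives one absorbs it into, say, $\zeta_0$). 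I would also note that the degenerate eigenvalues and the case $\lambda t\in\{0,\pm\pi\}$, where the block need not be two-dimensional or $O$ acts as $\pm\mathds{1}$, are handled without change since the identity $\bra{0}\mathcal{R}\ket{0}=\tilde g_q(x_\lambda)$ of Theorem~\ref{lem:qsp2} holds pointwise for all $|x|\leq\pi$.

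With the block correspondence in hand, the conclusion is immediate: by Theorem~\ref{lem:qsp2}, the hypothesis $|\tilde g_q(x)|\leq 1$ for all $x\in[-\pi,\pi]$ guarantees the existence of $\boldsymbol{\Phi}\in\mathbb{R}^{4(q+1)}$ with $\bra{0}\mathcal{R}(x,\boldsymbol{\omega},\boldsymbol{\Phi})\ket{0}=\tilde g_q(x)$ for all $|x|\leq\pi$, and since $x_\lambda=\lambda t\in[-t,t]\subseteq[-\pi,\pi]$ for every eigenvalue $\lambda\in[-1,1]$ (using $t=x_0\leq\pi$), projecting $\boldsymbol{V}_{\boldsymbol{\Phi}}$ onto $\ket{0}\bra{0}$ in the ancilla gives $\sum_\lambda \bra{0}\mathcal{R}(x_\lambda,\boldsymbol{\omega},\boldsymbol{\Phi})\ket{0}\,\ketbra{\lambda}{\lambda}=\sum_\lambda \tilde g_q(x_\lambda)\,\ketbra{\lambda}{\lambda}=\tilde g_q[Ht]$, which is Eq.~\eqref{eq:rtrealfunction}; this is by definition a perfect block-encoding of $\tilde g_q[Ht]$. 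The classical $\mathcal{O}(\mathrm{poly}(q))$ computability of $\boldsymbol{\Phi}$ is inherited verbatim from the corresponding claim in Theorem~\ref{lem:qsp2}, since it is literally the same pulse sequence. I expect the main obstacle to be purely notational — tracking the global phases and the odd/even indexing so that the alternating $\omega_k=(-1)^k/2$ pattern emerges exactly rather than up to a sign or phase; the structural content is entirely carried by Theorem~\ref{lem:qsp2}.
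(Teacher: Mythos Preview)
Your proposal is correct and follows essentially the same route as the paper: both arguments diagonalize $H$, rewrite the oracle as $O=\sum_\lambda e^{-ix_\lambda/2}\ketbra{\lambda}{\lambda}\otimes e^{i(x_\lambda/2)Z}$, identify each building block $W_{\text{in}}$, $V_{\boldsymbol{\xi}_k}$, $\bar V_{\boldsymbol{\xi}_k}$ with the corresponding single-qubit gate $R(x_\lambda,\omega_k,\boldsymbol{\xi}_k)$, and then invoke Theorem~\ref{lem:qsp2}. Your treatment of the $\lambda$-dependent phases $e^{\mp ix_\lambda/2}$ and their pairwise cancellation across the $q/2$ calls to $O$ and $q/2$ calls to $O^\dagger$ is in fact more explicit than the paper's, which silently drops these factors in its intermediate equations.
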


Let us further discuss  the convergence of $\tilde{g}_q$.
When $g$
is a periodic function with period $2\pi$, with a finite number of jump discontinuities, 
the standard partial Fourier series $\tilde{g}_q(x)=\sum_{m=-q/2}^{q/2}c_m\,e^{imx}$ with $c_m=\frac{1}{2\pi} \int_{-\pi}^{\pi}g(x)e^{-imx}dx$
converges to $g(x)$ when $q$ goes to infinity for all $x$ except for the discontinuities, where the series converges to the average value of the function at both sides of the discontinuity. Besides, the convergence close to the discontinuities is slow due to  the so-called Gibbs phenomenon. If $g$ is obtained by the simple periodic repetition of the values taken by $f$ -- usually a non-periodic continuous function -- inside the interval $[-\pi,\pi]$, then the borders of that interval present jump discontinuities.
In that case, $\tilde{g}_q(x)$ is a good approximation for
$g(x)$ only in the sense that, given  $\varepsilon>0$ and $\delta\in(0,\pi)$, there is a
large enough truncation order $q/2$ such that $\left|g(x)-\tilde{g}_q(x)\right|\leq\,\varepsilon$
for all $x\in[-\pi+\delta,\pi-\delta]$.  Moreover, this approach does not have a closed expression relating the error to $\delta$ and to the truncation order $q/2$, and a case-by-case analysis is required. This discussion justifies taking $g$ as equal to $\alpha f$ in a limited interval, as $f$ in general is not periodic. Outside $[-x_0,x_0]$, $g$ can take any values, provided it remains normalized and continuous. 
For instance, one could define $g$ as the periodic repetition of the continuous function $h:\mathbb{R}\rightarrow\mathbb{C}$ defined as
\begin{equation}\label{eq:g}
h(x)=
\begin{cases}
      h_1(x)\quad x\in[-\pi,-x_0]\\
      \alpha f(x/t) \quad x\in[-x_0,x_0]\\
      h_2(x) \quad x\in[x_0,\pi],
\end{cases}
\end{equation}
with $h_1$ and $h_2$ any continuous functions satisfying $h_2(\pi)=h_1(-\pi)$, $h_1(-x_0)=\alpha f(-x_0/t)$, and $h_2(x_0)=\alpha f(-x_0/t)$, such that $g$ itself is continuous.
Nevertheless, in  this case, possibly the first derivative of $g(x)$ in $x=-x_0$ and $x=x_0$ would present jump discontinuities, compromising the convergence of $\tilde{g}_q$. 

In what follows, we present two ways of obtaining a Fourier series that approximates the target function.  The first one  does not use the standard Fourier integral to obtain the series coefficients,  while the second method uses a filter to produce an analytical function $g$. While the first method has the advantage of providing analytical bounds for the complexity $q$ as a function of the error $\varepsilon$, the second one is classically much less expensive.

\subsection{Bounded-error Fourier series}
\label{sec:boundedError}

Consider that a power series $\tilde{f}_L(\lambda)=\sum_{l=0}^{L}a_l \lambda^l$ is given such that $|f(\lambda)-\tilde{f}_L(\lambda)|\leq\varepsilon/(4\alpha)$ for all $\lambda\in[-1,1]$. For the method presented in this section we do not need to specify $g$ outside the interval $[-x_0,x_0]$ and we take $\varepsilon_1=0$, such that 
\begin{equation}
\left|g(x_\lambda)-\alpha\sum_{l=0}^L \frac{a_l}{t^l}x_\lambda^l\right|\leq\varepsilon/4, 
\end{equation}
for all $x_\lambda\in[-x_0,x_0]$ with $t=x_0$.
We employ  a construction from Ref.\,\citep{vanApeldoorn2020quantumsdpsolvers} that, given $0<\delta<\pi/2$, $\alpha \tilde{f}_L(\lambda)$, and
\begin{equation}
\label{eq:q_rt}
 q\geq\left\lceil\frac{2\pi}{\delta}\ln\left(\frac{4}{\varepsilon}\right)\right\rceil
\end{equation}
 yields $\boldsymbol{c}=\{c_m\}_{|m|\leq q/2}$ such that $\tilde{g}_q$ $\varepsilon$-approximates $g$ for all $x_{\lambda}\in[-x_0,x_0]$ with $x_0=\pi/2-\delta$.  The important features of this construction are given in a slightly modified version of Lemma
37 of Ref. \citep{vanApeldoorn2020quantumsdpsolvers}:

\begin{lem}\label{tayfou} (Lemma 37 of Ref. \cite{vanApeldoorn2020quantumsdpsolvers}) Let $\delta\in(0,\pi/2)$,  $\varepsilon\in(0,1)$,
and $g:\mathbb{R}\rightarrow\mathbb{C}$ be such that $\left|g(x)-\sum_{l=0}^{L}d_{l}\left(\frac{2}{\pi}x\right)^{l}\right|\leq\,\varepsilon/4$
 for all $x\in[-\frac{\pi}{2}+\delta,\frac{\pi}{2}-\delta]$. Then $\exists\,\mathbf{c}\in\mathbb{C}^{q+1}$
such that 
\begin{equation}
\left|g(x)-\tilde{g}_q(x)\right|\leq\,\varepsilon\label{eq:newfourier}
\end{equation}
for all $x\in[-\frac{\pi}{2}+\delta,\frac{\pi}{2}-\delta]$, where $\tilde{g}_q(x)=\sum_{m=-q/2}^{q/2}c_{m}e^{i m x}$, $q=\text{max}\left[\left\lceil\frac{2\pi}{\delta}\ln\left(\frac{4\|d\|_{1}}{\varepsilon}\right)\right\rceil,0\right]$,
and $\|\mathbf{c}\|_{1}\leq\|\mathbf{d}\|_{1}$. Moreover, $\mathbf{c}$ can be efficiently calculated
on a classical computer in time $\textrm{poly}(L,q,log(1/\varepsilon))$.\end{lem}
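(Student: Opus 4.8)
The plan is to peel the statement down to a single approximation-theoretic step and to track the $\ell_1$ norm through it. Set $P(x):=\sum_{l=0}^{L}d_l\big(\tfrac{2}{\pi}x\big)^{l}$ and $I_\delta:=\big[-\tfrac{\pi}{2}+\delta,\tfrac{\pi}{2}-\delta\big]$. Because the $\ell_1$ norm of the Fourier coefficients controls $\| \tilde g_q \|_\infty$ on all of $\mathbb{R}$, it suffices --- by the triangle inequality together with the hypothesis $|g-P|\le\varepsilon/4$ on $I_\delta$ --- to exhibit $\tilde g_q(x)=\sum_{|m|\le q/2}c_m e^{imx}$ with $|P(x)-\tilde g_q(x)|\le 3\varepsilon/4$ for $x\in I_\delta$, with $\|\mathbf c\|_1\le\|\mathbf d\|_1$, and with $q$ of the stated size; the efficiency claim then follows automatically, since every operation below is an explicit finite manipulation of the $d_l$ (truncations and expansions of powers of a fixed function), executable in $\mathrm{poly}(L,q,\log(1/\varepsilon))$ time.

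I would build $\tilde g_q$ in two moves. First, a truncation in the \emph{monomial} variable, which is the source of the bound $q\simeq\tfrac{2\pi}{\delta}\ln\!\big(\tfrac{4\|\mathbf d\|_1}{\varepsilon}\big)$: on $I_\delta$ one has $\big|\tfrac{2}{\pi}x\big|\le 1-\tfrac{2\delta}{\pi}$, so the tail $\big|\sum_{l>n}d_l(\tfrac{2}{\pi}x)^l\big|\le\|\mathbf d\|_1\,e^{-2\delta n/\pi}$ is $\le\varepsilon/4$ once $n\gtrsim\tfrac{\pi}{2\delta}\ln\!\big(\tfrac{4\|\mathbf d\|_1}{\varepsilon}\big)$, leaving a polynomial $P_n=\sum_{l\le n}d_l(\tfrac{2}{\pi}x)^l$ whose coefficient vector still has $\ell_1$ norm $\le\|\mathbf d\|_1$. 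Second, realise $P_n$ as a finite Fourier series that agrees with it on $I_\delta$: on $[-\tfrac{\pi}{2},\tfrac{\pi}{2}]$ one has $\tfrac{2}{\pi}x=\tfrac{2}{\pi}\arcsin(\sin x)$, so $P_n(x)=R_n(\sin x)$ with $R_n(y):=\sum_{l\le n}d_l\big(\tfrac{2}{\pi}\arcsin y\big)^l$, and $\sin$ maps $I_\delta$ into $[-\cos\delta,\cos\delta]$; approximating $R_n$ on that interval by a polynomial $\tilde R_n=\sum_k r_k y^k$ and setting $\tilde g_q:=\tilde R_n(\sin x)$ works, because $\sin^k x$ is a degree-$k$ trigonometric polynomial whose Fourier coefficients have $\ell_1$ norm exactly $1$, so $\|\mathbf c\|_1\le\sum_k|r_k|$. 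The bound $\|\mathbf c\|_1\le\|\mathbf d\|_1$ would then follow from the fact that $\tfrac{2}{\pi}\arcsin y=\sum_{k\ge0}\beta_k y^{2k+1}$ has $\beta_k\ge0$ with $\sum_k\beta_k=1$, whence $R_n$ has a power series of $\ell_1$ norm $\le\|\mathbf d\|_1$, preserved under a monomial truncation $\tilde R_n$.

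The step I expect to be the main obstacle is reconciling that $\ell_1$ control with the degree bound. Approximating $R_n$ on $[-\cos\delta,\cos\delta]$ by a \emph{truncated power series} keeps $\sum_k|r_k|\le\|\mathbf d\|_1$ but the natural error estimate there is $O\!\big((\cos\delta)^M\big)=O\!\big(e^{-M\delta^2/2}\big)$, which would force degree $M\sim\delta^{-2}$, whereas the claimed $\delta^{-1}$ needs one to exploit that $\tfrac{2}{\pi}\arcsin$ is analytic on a neighbourhood whose width near $\pm\cos\delta$ is $\Theta(\delta)$ (equivalently, a Chebyshev/Bernstein-ellipse estimate of eccentricity $\Theta(\delta)$), or else to carry out the second move directly in $x$-space --- exploiting the large \emph{free} portion of the period $[-\pi,\pi]\setminus I_\delta$ to build a $2\pi$-periodic extension of $P_n|_{I_\delta}$ that is analytic in a $\Theta(\delta)$-strip --- while still arranging the Fourier coefficients to sum to at most $\|\mathbf d\|_1$ rather than the $\|\mathbf d\|_1/\delta$ that a crude strip estimate yields. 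Threading this needle --- the tight $\ell_1$ normalisation, the $\delta^{-1}$ degree, and the absence of any $L$-dependence in $q$ --- is the crux; the remaining ingredients (the triangle-inequality bookkeeping and the $\mathrm{poly}(L,q,\log(1/\varepsilon))$ classical cost) are routine.
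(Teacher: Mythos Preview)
The paper does not prove this lemma: it is quoted (as a ``slightly modified version of Lemma 37'') from Ref.~\cite{vanApeldoorn2020quantumsdpsolvers} and used as a black box, with no argument given in the body or in the proofs section. So there is no in-paper proof to compare your attempt against; the relevant benchmark is the construction in that reference.

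Your overall mechanism is the right one and matches the cited construction: rewrite each monomial $(\tfrac{2}{\pi}x)^l$ as a power series in $\sin x$ via the Maclaurin expansion $\tfrac{2}{\pi}\arcsin y=\sum_{j\ge0}\beta_j y^{2j+1}$ with $\beta_j\ge0$ and $\sum_j\beta_j=1$, then use that each $\sin^k x$ is a trigonometric polynomial whose Fourier coefficients have $\ell_1$ norm exactly $1$. This is exactly what yields $\|\mathbf c\|_1\le\|\mathbf d\|_1$, and the bookkeeping you sketch for that inequality is correct.

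The genuine gap is where the degree bound comes from. Your ``first move'' --- truncating $P$ in the monomial index at $n\sim\tfrac{\pi}{2\delta}\ln(4\|\mathbf d\|_1/\varepsilon)$ --- is either vacuous (when $L\le n$) or harmless, but in neither case does it bound the \emph{Fourier} degree $q$: once you pass through $\arcsin$, a single monomial $(\tfrac{2}{\pi}x)^l$ already becomes an infinite series in $\sin x$, so the $x$-degree $n$ has no direct bearing on the truncation order in $\sin x$. The entire $q$-bound must therefore come from your ``second move'', and there --- as you yourself flag --- the naive power-series tail bound on $[-\cos\delta,\cos\delta]$ gives only $(\cos\delta)^{q}\approx e^{-q\delta^2/2}$, forcing $q\sim\delta^{-2}$ rather than the claimed $\delta^{-1}$. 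So your proposal has correctly isolated the crux but has not resolved it; as written it does not deliver the stated $q$. Since the present paper offers no argument of its own here, you would need to consult Ref.~\cite{vanApeldoorn2020quantumsdpsolvers} directly for how this step is handled.
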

 
For $f$ analytical, one can obtain the power series of $g$ from a truncated Taylor series of $f$ using that $g(x_{\lambda})=\alpha f(\lambda)$. The truncation order $L$ can be obtained from the remainder:
\begin{equation}
\label{eq:Taylor_remainder}
\frac{\varepsilon}{4}\leq\frac{\underset{\lambda\in[-1,1]}{\max}\left|\alpha\, f^{(L+1)}(\lambda)\right|}{(L+1)!}.
\end{equation}
 In addition,  we may bound the sub-normalization constant $\alpha$ in terms of the  Taylor coefficients $\boldsymbol{a}:=\{a_l\}_{0\leq l\leq L}$ of $f$ as follows. Lemma \ref{tayfou} can be applied to build  a Fourier series for $g(x_\lambda)$, identifying $d_l=\alpha a_l/(2t/\pi)^l= \alpha a_l/(1-\frac{2\delta}{\pi})^l$ since $t=x_0=\pi/2-\delta$. 
 The expansion $\tilde{g}_q(x_\lambda)$ obtained converges to the target function only in  $\left[-\frac{\pi}{2}+\delta,\frac{\pi}{2}-\delta\right]$,
 although the period of $\tilde{g}_q$ is still $2\pi$. To implement $\tilde{g}_q$ using Theorem \ref{fourier}, we need to chose $\alpha$ as to guarantee the normalization $|\tilde{g}_q(x_\lambda)|\leq1$ in the whole interval $[-\pi,\pi]$. For that, notice that $|\tilde{g}_q(x)|=\left|\sum_{m=-q/2}^{q/2}c_{m}e^{i m x}\right|\leq \sum_{m=-q/2}^{q/2}|c_{m}|=\|\mathbf{c}\|_1$ for all $x\in[-\pi,\pi]$. On the other hand, according to Lemma \ref{tayfou}, the vector of coefficients  satisfies $\|\mathbf{c}\|_{1}\leq\|\mathbf{d}\|_{1}$, ensuring that  $|\tilde{g}_q(x)|\leq 1$ is attained in the whole interval $[-\pi,\pi]$ if  $\|\mathbf{d}\|_{1}=\sum_{l=0}^L|d_l|=\alpha\sum_{l=0}^L\left|a_l/(1-\frac{2\delta}{\pi})^l\right|\leq1$. Therefore, it  suffices to take $\alpha$ such that 
\begin{equation}
\label{eq:sub_norm}
\sum_{l=0}^L\big|a_l/(1-2\,\delta/\pi)^l\big|\leq\alpha^{-1}.
\end{equation}
 Note that $L$ and $\alpha$ are inter-dependent. One way to determine them is to increase $L$ and iteratively adapt $\alpha$ until Eqs. \eqref{eq:Taylor_remainder} and \eqref{eq:sub_norm} are both satisfied. Alternatively, if the expansion converges sufficiently fast (e.g., if $\lim_{l\rightarrow \infty}|\frac{a_{l+1}}{a_l}|<1-\frac{2\delta}{\pi}$), one can simply substitute $L$ in Eq.\ \eqref{eq:sub_norm} by $\infty$, simplifying the analysis. 
Finally, the obtained $\boldsymbol{c}$ can be input to Alg. \ref{alg:frealtime} in order to produce the desired operator function.

We experimented applying Lemma \ref{tayfou} in the context of imaginary-time evolution  \cite{silva2022fragmented}. The desired function to be implemented there is  $f({\lambda}) = e^{-\beta(\lambda + 1)}$, where $\beta$ is the inverse temperature. While Lemma \ref{tayfou} does indeed provide a theoretically efficient recipe to obtain a Fourier-series approximation, and it also guarantees a controlled approximation error, it proved prohibitively time consuming to evaluate said coefficients numerically. This happens due to the $1/\delta$ dependence in the expansion order of the desired series - actually, the number of operations to obtain the coefficients from Lemma \ref{tayfou} scales as $\mathcal{O}(1/\delta ^{3})$. In essence, we want $\delta$ to be small so that the parameter $\alpha^{-1}$ in Eq. \eqref{eq:sub_norm} is also small, which will have implication in the success probability of the block-encoding implementation. For imaginary-time evolution, we \cite{silva2022fragmented} obtain $\delta = \mathcal{O}(1/\beta)$ for large $\beta$, as the $\delta$ that optimizes the overall number of calls to the oracle, so that the truncation order will increase linearly with $\beta$ and the computational complexity as $\mathcal{O}(\beta^{3})$. Since one is usually interested in the low temperature regime, the construction turned out to be too expensive for large $\beta$. An alternative to this construction is provided below.

\subsection{Analytic-extension Fourier series}
\label{sec:periodicExt}

It is known \cite{Webb2018} that the Fourier series of an analytical, periodic function converges exponentially fast, i.e, the error $\varepsilon$ made in the approximation goes asymptotically as $\varepsilon=\mathcal{e^{-cq}}$, $c > 0$,  where $q$ is the truncation order. Our goal is to obtain a periodic extension of $f$ which is analytic in the whole interval $[-\pi,\pi]$. The advantage of this method is that there will be no unnecessary sub-normalization. This would correspond to $\delta = 0$ in Lemma\ \ref{tayfou}, which is only possible there if we consider the exact, infinite series. For simplicity, we consider $t=1$ such that $x_\lambda=\lambda t=\lambda$ and, in a slight abuse of notation, we use the variable $\lambda$ everywhere, even  outside the interval $[-1,1]$ of eigenvalues of $H$. Suppose $f(\lambda)$ is analytic for $\lambda \in [-\pi, \pi]$ and $|f(\lambda)| \leq 1$ for $\lambda \in [-1, 1]$, then the function $g(\lambda):=\tilde{g}(\lambda, L, \chi) = f(\lambda)b(\lambda, L, \chi)$, with 
\begin{equation}
\label{eq:func_filter}
b(\lambda, L, \chi) = \frac{erf[L(\lambda + \chi)] - erf[L(\lambda - \chi)]}{2},
\end{equation}
where $erf$ is the error function, $1<\chi<\pi $, is an approximate, analytic \cite{Boyd2002} extension of $f$ in the interval  $\lambda \in [-\pi, \pi]$. Before we make this statement more precise, note that in the limit of $L \rightarrow \infty$, $b(x, L, \chi)$ is a perfect step function which equals $1$ in the interval $[-\chi, \chi]$ and zero otherwise. For a finite $L$, $b(\lambda, L, \chi)$ will then be a smoothed step function and therefore, ${g}(\lambda)$ will only approximate $f(\lambda)$ over $[-\chi, \chi]$. By bounding Eq. \eqref{eq:func_filter} one can see that
\begin{equation}
\label{eq:L_bound}
\begin{aligned}
|f(\lambda) - {g}(\lambda)|   &< \frac{1}{2}(erf[L(\lambda + \chi)] -1) \\
                                            &< \frac{1}{2}erfc[L(\chi - 1)] \\
                                            &< \frac{1}{2}e^{-L^{2}(\chi - 1)^2},
\end{aligned}
\end{equation}
for all $\lambda \in [-1, 1]$, where $erfc$ is the complementary error function. Thus, if we take
\begin{equation}
\label{eq:L_bound_final}
L_{\varepsilon} \geq \frac{1}{\chi - 1}\sqrt{\ln\left(\frac{3}{2\varepsilon}\right)},
\end{equation}
it is guaranteed that
\begin{equation}
\label{eq:func_filtered_bound}
|f(\lambda) - {g}(\lambda)| < \frac{\varepsilon}{3}.
\end{equation}
Moreover, because $g(\lambda)\rightarrow 0$ for $\lambda\rightarrow\pm\pi$, the periodic extension of $g$ with period $2\pi$ has no jump discontinuities.
\begin{figure}[t!]
\centering
\includegraphics[width=1\columnwidth]{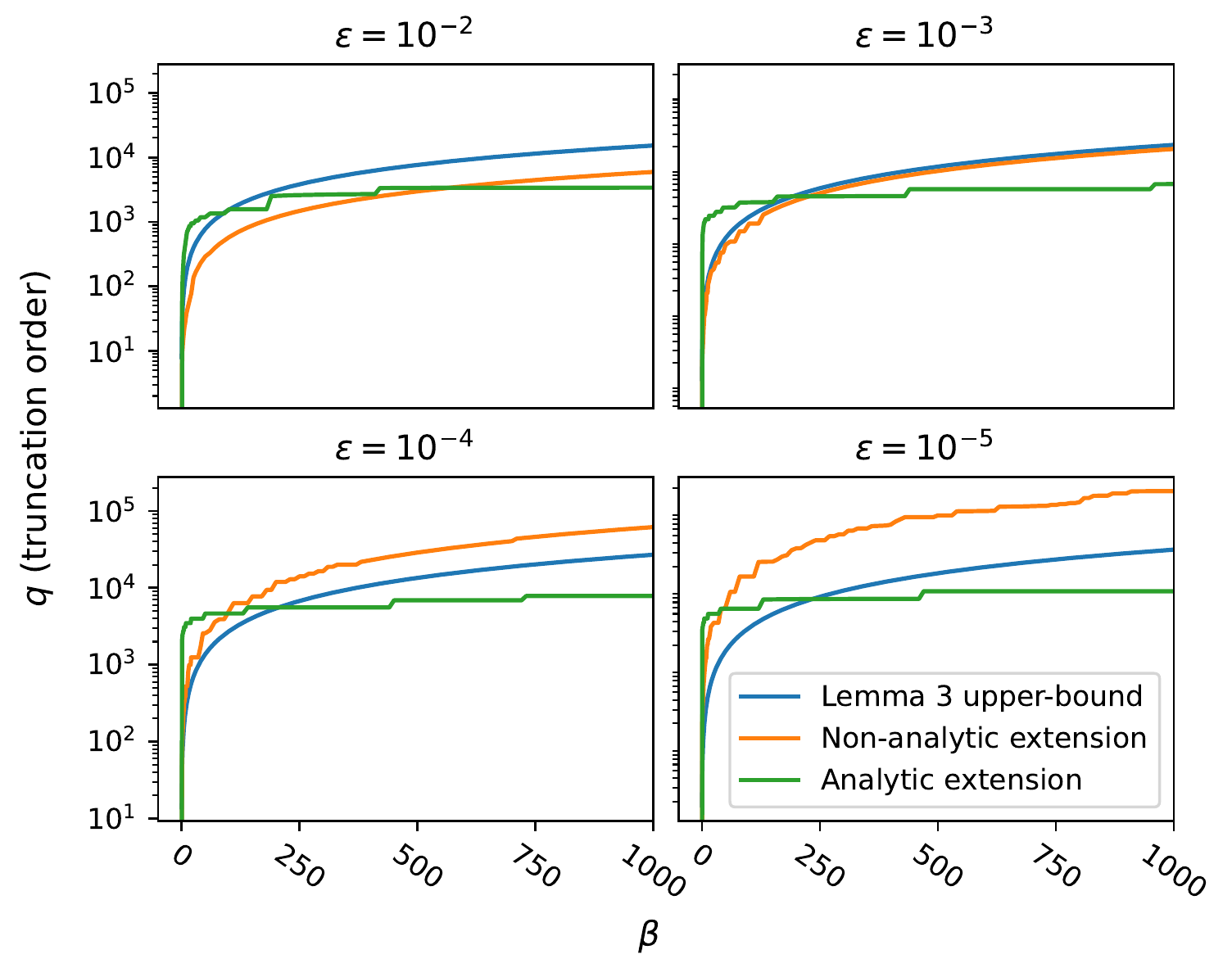}
\caption{\label{fig:mcomparisson} \textbf{Comparison among different Fourier approximations for the real exponential function.} On each panel, the truncation order (i.e. query complexity $q$) necessary to obtain an error bounded by $\varepsilon$ when approximating the function $f({\lambda}) = e^{-\beta(\lambda + 1)}$ is shown,  with $\beta\in\mathbb{R}$ the  inverse temperature, for four different values of $\varepsilon$. The blue line is the $q$ given by Eq. \eqref{eq:q_rt} with $\delta = 2\pi/(\sqrt{\beta + 4} + \sqrt{\beta})^{2}$, taken from \cite{silva2022fragmented}, chosen as to optimize the overall number of queries $Q_\Psi(f[H],\varepsilon,\alpha):=\frac{1}{p_\Psi(\alpha f[H])}q_f(\varepsilon)$. The orange line represents the truncation order of the conventional Fourier approximation of a non-analytic, continuous periodic extension of $f(\lambda)$ by a function $g(x)$ with $h_1$ and $h_2$ in Eq.\ \eqref{eq:g} are given by linear interpolating functions with the same slope.
One can see that $g(x)$ obtained in this way is continuous (if $f$ is continuous), but it displays discontinuities in its first derivative. The green line shows the corresponding truncation order for the  analytic extension function $\tilde{g}(x_{\lambda}, L_{\varepsilon}, \chi _{\varepsilon})$. Both in the analytic and non-analytic extension cases, $q$ is found by means of a binary search \cite{BSearchNote}, obtaining the conventional Fourier coefficients and calculating the error relatively to $f$ on each trial for $q$. Although Lemma \ref{tayfou} beats the analytic-extension construction for  small $\beta$, the latter asymptotically outperforms the former for high $\beta$. Also, the higher the precision, the higher the temperature at which this advantage over Lemma \ref{tayfou} is achieved. We verified this behavior with precision as high as $10^{-9}$.
}
 \end{figure}{}

From Eq.\ \eqref{eq:L_bound_final}, we see that for $\chi \rightarrow 1$, one would need $L \rightarrow \infty$ to satisfy the error bound in Eq. \eqref{eq:func_filtered_bound} -- in fact, in that case, with $\varepsilon = 0$. This happens because, although in the open interval $(-1, 1)$ error $\varepsilon = 0$ can be obtained, at the boundaries we have $b(\pm\chi, L, \chi) \approx 1/2$. To fix this, we  consider $\chi$ such that $[-1, 1] \in (-\chi, \chi)$. Also, because $g$ must be bounded by one -- up to given error -- in the closed interval $[-1, 1]$, we fix $\chi=\chi_\varepsilon$ through the equation
\begin{equation}
\label{eq:chi_definition}
\max_{\lambda \in [-\chi_{\varepsilon}, \chi_{\varepsilon}]}|f(\lambda)| = 1 + \frac{\varepsilon}{3}.
\end{equation}
Here we are assuming that the filter $b(\lambda, L, \chi)$ decays faster than $f$ for $|\lambda| > \chi$. One can see, with an analysis similar to \eqref{eq:L_bound}, that this would hold as long as $f$ does not grow faster than $L(|\lambda| - \chi)e^{L^{2}(|\lambda| - \chi)^2}$ for $|\lambda| > \chi$.
In the end, we implement the truncated standard Fourier expansion of the function $g(\lambda)=\tilde{g}(\lambda, L_{\varepsilon}, \chi _{\varepsilon})$ with the guarantee of exponential convergence. Say that $\tilde{g}_{q}$ is such an expansion with error $\varepsilon_0=\varepsilon/3$, i.e, $|\tilde{g}_{q}({\lambda}) - {g}({\lambda})| \leq \varepsilon/3$ for ${\lambda} \in [-1, 1]$. If we take $L_{\varepsilon}$ as in Eq. $\eqref{eq:L_bound_final}$ and $\chi _{\varepsilon}$ as in Eq. $\eqref{eq:chi_definition}$, we have
\begin{equation}
\label{eq:extension_approx}
|\tilde{g}_{q}({\lambda}) - f(\lambda)| < \varepsilon,
\end{equation}
and just an insignificant sub-normalization factor over the whole interval, i.e, $|\tilde{g}_{q}(\lambda)| \leq 1 + \varepsilon$, $\forall \lambda \in [-\pi, \pi]$.

The downside of this method is that we do not provide an algorithm to compute the coefficients of the series $\tilde{g}_{q}(\lambda)$, nor do we provide an analytical way to obtain the truncation order $q$ as a function of the desired error $\varepsilon$. At first sight this might seem to invalidate the whole construction. However, as was already mentioned, this method came about precisely because Lemma \ref{tayfou} becomes infeasible for small $\delta$.
Even having to evaluate the Fourier coefficients explicitly from its integral definition, and having to brute-force search the truncation order $q$, this construction turned out to outperform the previous one in terms of classical computation time for the particular studied case of the exponential function. Since one of the most important metrics of oracle-based methods is the number of queries to the oracle, which here is  the truncation order of the series used to approximate the desired function, a natural comparison to make between this method and the one presented in Lemma \ref{tayfou} is  the truncation order necessary to achieve the same error guarantee. In Fig. \ref{fig:mcomparisson} we can see how the methods compare to each other in this aspect.  Note that for Lemma \ref{tayfou} we can only use the upper-bound since the construction of the series for large $\beta$ was too expensive to be carried out, let alone construct many of them in order to binary search \cite{BSearchNote} the minimum truncation order that would guarantee the desired error. In Fig.\ \ref{fig:mcomparisson} one can also see the comparison with a non-analytic extension as well.

\section{Proofs of Theorems}
\label{sec:proofs}

\subsection{Real-variable function design: Proof of Theorem \ref{lem:qsp2} and classical algorithm for pulse angles}\label{sec:proofRV}

Before proceeding with the proof of Theo.\ \ref{lem:qsp2}, we state and prove a lemma that provides an algorithm to find a unitary operator that has  the target Fourier series $\tilde{g}_q(x)$ as one of its entry. This also generalizes the results in Ref.\ \cite{Haah2019product} in the sense of removing the parity constraint required there.

\begin{lem}\label{lem:complementary}[Complementary Fourier series]
 Given  $\tilde{g}_q(x)=\sum_{m=-q/2}^{q/2} c_m\, e^{im x}$, $q\in\mathbb{N}$ even, satisfying $|\tilde{g}_q(x)|\leq 1$ for all $|x|\leq\pi$, there is a Fourier series $\tilde{h}_q(x)=\sum_{m=-q/2}^{q/2} b_m\, e^{im x}$ (with the same order $q/2$), such that 
 \begin{equation}\label{eq:RFourier1}
  U_{\tilde{g}_q\tilde{h}_q}(x)= \left(
  \begin{array}{cc}
  \tilde{g}_q(x)&\tilde{h}_q(x)\\
  -\tilde{h}_q^*(x)&\tilde{g}_q^*(x)
  \end{array}
  \right)
 \end{equation}
 is an $SU(2)$ operator. Moreover, the coefficients $\{b_m\}_m$ can be calculated classically in time $\mathcal{O}\left(\text{poly}(q)\right)$.
\end{lem}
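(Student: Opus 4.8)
\emph{Proof plan.} The statement is a spectral-factorization (Fejér–Riesz) result in disguise. First I would reduce it to a purely scalar claim: it suffices to produce a trigonometric polynomial $\tilde h_q(x)=\sum_{m=-q/2}^{q/2}b_m e^{imx}$ of order $q/2$ such that $|\tilde h_q(x)|^2 = 1-|\tilde g_q(x)|^2$ for all real $x$. Indeed, given such an $\tilde h_q$, the matrix $U_{\tilde g_q\tilde h_q}(x)$ in \eqref{eq:RFourier1} has orthonormal columns and determinant $|\tilde g_q(x)|^2+|\tilde h_q(x)|^2=1$, hence lies in $SU(2)$. So the entire content is the existence and the efficient classical computability of the complementary polynomial $\tilde h_q$. (Note that, in contrast to Ref.\ \cite{Haah2019product}, we work with full Laurent polynomials $\sum_{m=-q/2}^{q/2}$ rather than with series of fixed parity, which is what removes the parity restriction.)

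Next I would set $z=e^{ix}$ and pass to genuine polynomials. Write $A(z):=z^{q/2}\tilde g_q(x)=\sum_{j=0}^{q}c_{j-q/2}\,z^{j}$, a polynomial of degree $\le q$, and let $\tilde A(z):=z^{q}\overline{A(1/\bar z)}$ be its conjugate-reciprocal (also of degree $\le q$); on the unit circle one has $z^{q/2}\overline{\tilde g_q(x)}=\tilde A(z)$. Then $P(x):=1-|\tilde g_q(x)|^2$ satisfies $z^{q}P(x)=z^{q}-A(z)\tilde A(z)=:\mathcal P(z)$, a polynomial of degree $\le 2q$ that is self-inversive, $\mathcal P(z)=z^{2q}\overline{\mathcal P(1/\bar z)}$, so its nonzero roots occur in conjugate-reciprocal pairs $\rho,\,1/\bar\rho$. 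Moreover $P$ is a real trigonometric polynomial that is nonnegative everywhere — this is exactly the hypothesis $|\tilde g_q|\le 1$ — so every real zero of $P$ has even multiplicity, equivalently every unimodular root of $\mathcal P$ has even multiplicity. (If $P\equiv 0$ then $\tilde g_q$ is unimodular and $\tilde h_q\equiv 0$ already works; so assume $P\not\equiv 0$.)

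Now comes the factorization step. Distributing the roots of $\mathcal P$ — from each conjugate-reciprocal pair keep the root in the closed unit disk, and give each unimodular root half of its even multiplicity — yields a polynomial $Q(z)=\sum_{j=0}^{q}q_j z^j$ of degree $\le q$ with $\mathcal P(z)=\beta\,Q(z)\,\tilde Q(z)$ for a scalar $\beta$, where $\tilde Q$ is the conjugate-reciprocal of $Q$. Restricting to $|z|=1$, where $\tilde Q(z)=z^{q}\overline{Q(z)}$, this reads $z^{q}P(x)=\beta\,z^{q}|Q(e^{ix})|^2$, so $P(x)=\beta|Q(e^{ix})|^2$; since $P\ge 0$ and $P\not\equiv 0$ we get $\beta>0$. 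Setting $\tilde h_q(x):=\sqrt{\beta}\,z^{-q/2}Q(z)=\sqrt{\beta}\sum_{m=-q/2}^{q/2}q_{m+q/2}\,e^{imx}$ gives a trigonometric polynomial of order $q/2$ with $|\tilde h_q(x)|^2=\beta|Q(e^{ix})|^2=P(x)=1-|\tilde g_q(x)|^2$, and reading off $b_m=\sqrt{\beta}\,q_{m+q/2}$ finishes existence. For the complexity claim: the coefficients of $\mathcal P$ are obtained from $\{c_m\}_m$ by two polynomial multiplications ($O(q^2)$ arithmetic operations, or $O(q\log q)$ via the FFT); the roots of the degree-$\le 2q$ polynomial $\mathcal P$ are computed to the needed precision in $\mathrm{poly}(q)$ time by any standard numerical root-finder; and assembling $Q$, then $\{b_m\}_m$, from the selected roots is again $O(q^2)$.

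The step I expect to be the real obstacle is this root-distribution/factorization: one must verify carefully that self-inversiveness together with the positivity of $P$ on the circle genuinely forces the selected roots to assemble into a single polynomial $Q$ of degree $\le q$ with $\mathcal P=\beta Q\tilde Q$ and $\beta>0$ — in particular handling unimodular roots (even multiplicity) and possible roots at the origin / drops in the degree of $\mathcal P$ correctly in the conjugate-reciprocal bookkeeping. A secondary subtlety is the algorithmic claim itself: a finite-precision root-finder only returns approximate roots, so one should either argue that the resulting $\tilde h_q$ still meets the normalization $|\tilde g_q|^2+|\tilde h_q|^2=1$ to the desired tolerance (and absorb the error as in the remark following Theorem \ref{lem:qsp2}), or replace explicit root-finding by a Fourier/FFT-based computation of the complementary polynomial. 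The remaining parts — the $SU(2)$ verification and the translation between polynomials in $z$ and trigonometric polynomials in $x$ — are routine.
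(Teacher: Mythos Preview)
Your proposal is correct and follows essentially the same Fej\'er--Riesz spectral-factorization route as the paper's own proof: both pass to the degree-$2q$ polynomial $z^{q}(1-|\tilde g_q|^2)$ on the unit circle, use self-inversiveness to pair its roots as $\rho,\,1/\bar\rho$, invoke nonnegativity on $|z|=1$ to fix the sign of the leading constant, and then assemble $\tilde h_q$ from half the roots via numerical root-finding in $\mathrm{poly}(q)$ time. Your version is in fact slightly more careful about the edge cases (the degenerate case $P\equiv 0$, even multiplicity of unimodular roots, and finite-precision root computation) than the paper's argument, but the strategy is the same.
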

\begin{proof}
In order to $U_{\tilde{g}_q\tilde{h}_q}(x)$ be unitary, its entries must satisfy 
\begin{equation}\label{eq:unitarity2}
  |\tilde{g}_q(x)|^2+|\tilde{h}_q(x)|^2=1\quad \forall x\in\mathbb{R}.
 \end{equation}
 Let us define the Laurent polynomial $G(z)=\sum_{k=-q}^{q} a_k\, z^k$ such that for $z\in U(1)$ we have $G(z=e^{ix})=1-|\tilde{g}_q(x)|^2$. The coefficients of $G(z)$ can be obtained as
 \begin{equation}
  a_k=\begin{cases}
          -\sum_{l=-q/2+k}^{q/2}c_l c_{l-k}^*,&\quad k>0\\
          1-\sum_{l=-q/2}^{q/2}|c_l|^2,&\quad k=0\\
          -\sum_{l=-q/2}^{q/2+k}c_l c_{l-k}^*,&\quad k<0
      \end{cases}.
 \end{equation}
 $G(z)$ is such that, if the aimed $\tilde{h}_q(x)$ does exist, then for $z\in U(1)$ we have $G(z=e^{ix})=|\tilde{h}_q(x)|^2$. Therefore, the goal is to identify $G(z)$ with the product of a function with its complex conjugate in the unit circle. 
  
 Finally, take $p(z)$ as the degree-$2q$ polynomial such that $G(z)=z^{-q}p(z)$. If $\mathcal{L}=\{r_k\}_{k\in[2q]}$ is the list  of all roots of $p(z)$ with their multiplicities ordered by increasing modulus, then we can express
 \begin{equation}\label{eq:pz}
 \begin{split}
  G(z)&=a_qz^{-q}\prod_{k=1}^{2q}(z-r_k)\\
   & =\left[a_q\prod_{k=1}^{q}r_k\right]
\left[\prod_{k=1}^{q}\left(\frac{1}{z}-\frac{1}{r_k}\right)\right]\left[\prod_{k=q+1}^{2q} (z-r_k)\right].
\end{split}
\end{equation}

Now, notice that $G(z)$ is a real polynomial and it is zero or positive for $z$ in the unity circle. In particular, the former property, when applied to the unity circle ($z^*=1/z$, $z\in U(1)$) gives that
\begin{equation}
 G^*(z)=\left[a_q^*\prod_{k=1}^{q}r_k^*\right]\left[\prod_{k=1}^{q}\left(z-\frac{1}{r_k^*}\right)\right]\left[\prod_{k=q+1}^{2q}\left(\frac{1}{z}-r_k^*\right)\right]
\end{equation}
must be equal to Eq.\ \eqref{eq:pz}. Comparing the two equations leads to the conclusion that for each root $r_k\in\mathcal{L}$, $1/r_k^*$ is also in $\mathcal{L}$. This also automatically ensures that the constant $\left[a_q\prod_{k=1}^{q}r_k\right]$ is a real number by noticing that $a_q\prod_{k=1}^{q}r_k/r_k^{*}$ is the constant term in $p(z)$ since the product contains all its roots. Whereas the constant term is $a_{-q}=c_{-q/2}c_{q/2}^*=a_q^*$. 

At last, we can use the positiveness of $G(z)$ in the unit circle:
\begin{equation}
 G(z)=\left[a_q\prod_{k=1}^{q}r_k\right]\left[\prod_{k=1}^{q}\left(z^*-\frac{1}{r_k}\right)\left({z}-\frac{1}{r_k^*}\right)\right]\geq 0,
\end{equation}
for $z\in U(1)$, 
to conclude that $\left[a_q\prod_{k=1}^{q}r_k\right]\geq 0$. Here we used that $z^*=1/z$ and assumed that the second half of $\mathcal{L}$ is composed by $1/r_k^*$ for each $r_k$ in the first half. 

Finally, still for $z\in U(1)$ we can write 
\begin{equation}
\begin{split}
 G(z)=& \sqrt{a_q\prod_{k=1}^{q}r_k} \left[ z^{-q/2}\prod_{k=1}^{q}\left({z}-\frac{1}{r_k^*}\right)\right]\\
  &\quad\sqrt{a_q\prod_{k=1}^{q}r_k}\left[z^{q/2}\prod_{k=1}^{q}\left(z^*-\frac{1}{r_k}\right)\right], 
 \end{split}
\end{equation}
and identify $\tilde{h}_q(x)$ as the polynomial in the first line for $z=e^{ix}$, such that $G(z=e^{ix})=\tilde{h}_q(x)\tilde{h}_q^*(x)$. The complexity of calculating the coefficients of $\tilde{h}_q(x)$ is basically given by the complexity of finding the roots of the polynomial $G(z)$ and thus is $\mathcal{O}(\text{poly}(q))$.
\end{proof}

Now that the operator $U_{\tilde{g}_q\tilde{h}_q}(x)$ has been built, we can prove Theorem\ \ref{lem:qsp2}. The main idea is to show that $U_{\tilde{g}_q\tilde{h}_q}(x)$ can be expressed as  a product of the basic qubit gates $R(x,\omega,\boldsymbol{\xi})$. 

\begin{proof}[Proof of Theo. \ref{lem:qsp2}] First of all, we prove that the matrix elements of $\mathcal{R}(x,\boldsymbol{\omega},\boldsymbol{\Phi})$ are indeed Fourier series in $x$ with the correct frequency values. The basic QSP gate $R(x,\omega,\boldsymbol{\xi})=e^{i\frac{\zeta+\eta}{2} Z}e^{-i\varphi Y}e^{i\frac{\zeta-\eta}{2} Z}e^{i\omega x Z}e^{-i\kappa Y}$ with $\boldsymbol{\xi}=\{\zeta,\eta,\varphi,\kappa\}$ can be conveniently represented as \cite{PerezSalinas2021}
\begin{equation}
 R(x,\omega,\boldsymbol{\xi})=\left(
 \begin{array}{cc}
   a_+e^{i\omega x} + a_-e^{-i\omega x}& b_+e^{i\omega x} + b_-e^{-i\omega x} \\
    -b_-^*e^{i\omega x} - b_+^*e^{-i\omega x}& a^*_-e^{i\omega x} + a^*_+e^{-i\omega x} \\
 \end{array}
\right),
\end{equation}
with 
\begin{equation}\label{eq:amais}
 \begin{split}
 a_+=\cos\varphi\,\cos\kappa\, e^{i\zeta}&\qquad a_-=-\sin\varphi\,\sin\kappa\, e^{i\eta}\\
 b_+=-\cos\varphi\,\sin\kappa \,e^{i\zeta}&\qquad b_-=-\sin\varphi\,\cos\kappa\, e^{i\eta}.
 \end{split}
\end{equation}

We first use mathematical induction to prove that, for any even $q$, $\mathcal{R}(x,\boldsymbol{\omega},\boldsymbol{\Phi})$ is a Fourier series with frequencies $\{-\frac{q}{2},-\frac{(q-1)}{2}, \cdots,0,\cdots,\frac{(q-1)}{2},\frac{q}{2}\}$.
Let us define $\mathcal{R}^{(m)}(x,\boldsymbol{\omega},\boldsymbol{\Phi})$ as the partial product up to the $m$-th term of $\mathcal{R}(x,\boldsymbol{\omega},\boldsymbol{\Phi})$.
 We are taking $\omega_0=0$, $\omega_k=(-1)^{-1}1/2$ for all $k\neq 0$ even. Therefore, the entries of $\mathcal{R}^{(0)}(x,\boldsymbol{\omega},\boldsymbol{\Phi})$ are  $0$-order Fourier series is obtained, i.e complex constants. Moreover, if none of the QSP parameters in Eq.\ \eqref{eq:amais} is zero, then  the matrix elements of the operator resulting from the  multiplication of  $R(x{,}\omega_k,{\boldsymbol{\xi}}_k)$ and $R(x{,}\omega_{k+1},\boldsymbol{\xi}_{k+1})$ for $k\geq1$ are  order-$1$ Fourier series with all the frequencies $\{-1,0,1\}$. In particular, $\bra{0}\mathcal{R}^{(2)}(x,\boldsymbol{\omega},\boldsymbol{\Phi})\ket{0}$ is an order-$1$ Fourier series. Proceeding with the induction,  we now assume that, for $m\geq4$ even, the components of $\mathcal{R}^{(m-2)}(x,\boldsymbol{\omega},\boldsymbol{\Phi})$ are order-$(\frac{m-2}{2})$ Fourier series with all terms with frequencies $\{-(\frac{m}{2}-1),\cdots,0,\cdots,(\frac{m}{2}-1)\}$. Multiplying the next two iterators  $R(x,\omega_{m-1},\boldsymbol{\xi}_{m-1})$ and $R(x{,}\omega_{m},\boldsymbol{\xi}_{m})$ will add or subtract $1$ to the frequencies, or keep the same frequency of the terms in $\mathcal{R}^{(m-2)}(x,\boldsymbol{\omega},\boldsymbol{\Phi})$. This implies that $\mathcal{R}^{(m)}(x,\boldsymbol{\omega},\boldsymbol{\Phi})$ is a also a Fourier series with the due frequencies. Therefore, we conclude that this is true for $\mathcal{R}(x,\boldsymbol{\omega},\boldsymbol{\Phi})$ with any $q$ even.
 
 We still need to prove that, we can always find QSP pulses $\boldsymbol{\Phi}$ such that the epper left-hand matrix entry of $\mathcal{R}(x,\boldsymbol{\omega},\boldsymbol{\Phi})$ gives the desired Fourier series $\tilde{g}_q(x)$. From Lem.\ \ref{lem:complementary} we know how to obtain the complementary Fourier series $\tilde{h}_q$. 
The right-hand side of Eq.\ \eqref{eq:unitarity2} is constant, implying that all the coefficients of oscillating  terms in the left-hand side must vanish. In particular, the highest-frequency term gives $c_{q/2} c_{-q/2}^*+d_{q/2} d_{-q/2}^*=0$. Our strategy is to successively multiply $U_{\tilde{g}_q\tilde{h}_q}$ by the inverse of the QSP fundamental gates, finding the pulse ${\boldsymbol{\xi}}_k$ that reduces the frequency of its components in each turn, until we obtain the identity operator. This will at the same time show that $U_{\tilde{g}_q\tilde{h}_q}=\mathcal{R}(x,\boldsymbol{\omega},\boldsymbol{\Phi})$ and find the desired $\boldsymbol{\Phi}$. We start by multiplying Eq.\ \eqref{eq:RFourier1} from the left by $R^{-1}(x{,}\omega_q,\boldsymbol{\xi}_q)$, obtaining 
\begin{equation}
\begin{split}
U_{\tilde{g}_{q-1}\tilde{h}_{q-1}}:=& R^{-1}(x{,}\omega_q,\boldsymbol{\xi}_q)U_{\tilde{g}_q\tilde{h}_q}\\
 =& \!\left(\!
  \begin{array}{cc}
  \tilde{g}_{q-1}(x)&\tilde{h}_{q-1}(x)\\
  -\tilde{h}_{q-1}^*(x)&\tilde{g}_{q-1}^*(x)
  \end{array}
  \!\right),
  \end{split}
\end{equation}
where 
\begin{equation}
\begin{split}
 \tilde{g}_{q-1}(x)=&\sum_{m=-q/2}^{q/2}\left(a^*_{q+}e^{-i\frac{x}{2} } 
 + a^*_{q-}e^{i\frac{x}{2} } \right)c_m e^{im x}\\  & +\left(b_{q+}e^{i\frac{x}{2} } + b_{q-}e^{-i\frac{x}{2}}\right)d^*_m e^{-im x}
\end{split}
\end{equation}
and 
\begin{equation}
 \begin{split}
  \tilde{h}_{q-1}(x)=&\sum_{m=-q/2}^{q/2}\left(a^*_{q+}e^{-i\frac{x}{2} } + a^*_{q-}e^{i\frac{x}{2} } \right)d_m e^{im x}\\
   &-\left(b_{q+}e^{i\frac{x}{2} } + b_{q-}e^{-i\frac{x}{2} }\right)c^*_m e^{-im x}.
 \end{split}
\end{equation}
The terms oscillating with frequencies $(\frac{q+1}{2})$ and $-(\frac{q+1}{2})$ must vanish, since they increase the frequencies present in $U_{\tilde{g}_{q}\tilde{h}_{q}}$. This leads to the two linear systems
\begin{equation}\label{eq:sistema1}
 \begin{split}
  a_{q-}^*c_{q/2}+b_{q+}d_{-q/2}^*=0\\
  a_{q-}^*d_{q/2}-b_{q+}c_{-q/2}^*=0
 \end{split}
\end{equation}
and 
 \begin{equation}\label{eq:sistema2}
 \begin{split}
  a_{q+}^*c_{-q/2}+b_{q-}d_{q/2}^*=0\\
  a_{q+}^*d_{-q/2}-b_{q-}c_{q/2}^*=0,
 \end{split}
\end{equation}
 that must be solved for the variables $\{a_{q+},a_{q-},b_{q+},b_{q-}\}$. This leads to the QSP angles of the $q-$th pulse via Eq.\ \eqref{eq:amais}. The unitarity condition in Eq.\ \eqref{eq:unitarity2} implies that  $c_{q/2} c_{-q/2}^*+d_{q/2} d_{-q/2}^*=0$. This guarantees that both linear systems \eqref{eq:sistema1} and \eqref{eq:sistema2} admit non-trivial solutions. Using Eq.\ \eqref{eq:amais}, it can be directly verified that both \eqref{eq:sistema1} and \eqref{eq:sistema2} lead to $\tan\varphi_q=-e^{i(\zeta_q + \eta_q)}\frac{d_{-q/2}^*}{c_{q/2}}$ and $\zeta_q + \eta_q=-\text{Arg}\left(\frac{d_{-q/2}^*}{c_{q/2}}\right)$. Equations \eqref{eq:sistema1} and \eqref{eq:sistema2} impose no restriction over $\kappa_q$ , which we fix as $\pi/4$. Also, one  rotation can be eliminated by choosing $\eta_q=\zeta_q$. 
 
 The $(q-1)$-th pulses are obtained in the exact same way from the highest-order coefficients of $\tilde{g}_{q-1}(x)$ and $\tilde{h}_{q-1}(x)$. Because the frequency of the $(q-1)$-th iterator is $\omega_{q-1}=-1/2$, the solution found for the angles is slightly different: $\tan\varphi_{q-1}=e^{i(\zeta_{q-1} + \eta_{q-1})}\frac{\tilde{c}_{(q/2}}{\tilde{d}_{-q/2}^*}$ and $\zeta_{q-1} + \eta_{q-1}=\text{Arg}\left(\frac{\tilde{c}_{q/2}}{\tilde{d}_{-q/2}^*}\right)$, where $\tilde{c}_{q/2}$ is the coefficient of $e^{i(\frac{q-1}{2}) x}$ in $\tilde{f}_{q-1}(x)$, and $\tilde{d}_{-q/2}$ accompanies $e^{-i(\frac{q-1}{2}) x}$ in $\tilde{g}_{q-1}(x)$. After the $(q-1)$-th inverse gate, we are left with an order-$(\frac{q-2}{2})$ Fourier series. We can repeat this process until we get an order-$0$ Fourier series, which determines the 
parameters of the last gate with frequency $\omega_0=0$ to be $\kappa_0=0$, $\zeta_0=\text{Arg}(\tilde{g}_0)$, $\eta_0=\text{Arg}(\tilde{h}_0)$, and $\tan \varphi_0=-e^{i(\zeta_0-\eta_0)}\frac{\tilde{h}_0}{\tilde{g}_0}$. To obtain the pulses, $\mathcal{O}(q^2)$ arithmetic operations are used.
\end{proof}

\subsection{Operator function design: proof of Theorem\ \ref{fourier}}\label{sec:opfunproof}

\begin{proof}[Proof of Theorem\ \ref{fourier}]
The proof consists of showing that the operator $\boldsymbol{V}_{\boldsymbol{\Phi}}=\left(
 \bar{V}_{\boldsymbol{\xi}_{q}}{V}_{\boldsymbol{\xi}_{q-1}}\cdots \bar{V}_{\boldsymbol{\xi}_{2}}{V}_{\boldsymbol{\xi}_{1}} \right)W_{\text{in}}$, with  $W_{\text{in}}$, $V_{\boldsymbol{\xi}}$, and $\bar{V}_{\boldsymbol{\xi}}$ given by Eqs. \eqref{eq:WinRT},  \eqref{eq:vphiRT}, and \eqref{eq:vphibarRT}, respectively,  can be written as 
\begin{equation}\label{eq:vPhiqsp2}
 V_{{\boldsymbol{\Phi}}}=\sum_{\lambda}\ket{\lambda}\bra{\lambda}\otimes \mathcal{R}\left({x_{\lambda}},\boldsymbol{\omega},{\boldsymbol{\Phi}}\right),
\end{equation}
with $x_\lambda=\lambda t$ and $\boldsymbol{\omega}=\{0,\frac{1}{2},-\frac{1}{2},\cdots,\frac{1}{2},-\frac{1}{2}\}$. The single-qubit operator $\mathcal{R}\left({x_{\lambda}},\boldsymbol{\omega},{\boldsymbol{\Phi}}\right)$ satisfies Theorem \ref{lem:qsp2}. Therefore, given a Fourier series $\tilde{g}_q(x)$,  with $|\tilde{g}_q(x)|\leq 1$ for all $x\in[-\pi,\pi]$, it is possible to efficiently find  $\boldsymbol{\Phi}$ from ${\mathbf{c}}=\{{c}_m\}_m$ such that 
\begin{equation}
 \bra{0}V_{{\boldsymbol{\Phi}}}\ket{0}=\sum_{\lambda}\ket{\lambda}\bra{\lambda} \bra{0}\mathcal{R}\left({x_{\lambda}},\boldsymbol{\omega},{\boldsymbol{\Phi}}\right)\ket{0}
\end{equation}
reduces to Eq.\ \eqref{eq:rtrealfunction} with $\bra{0}\mathcal{R}\left({x_{\lambda}},\boldsymbol{\omega},{\boldsymbol{\Phi}}\right)\ket{0}=\tilde{g}_q(x_\lambda)$.

First of all, notice that Eq.\ \eqref{eq:WinRT} can be written, using the definition of $R(x,\omega,\boldsymbol{\xi})$, as
\begin{equation}\label{eq:WinRT3}
 W_{\text{in}}=\sum_\lambda \ketbra{\lambda}{\lambda}\otimes R(x_\lambda,\omega=0,\boldsymbol{\xi}_0).
\end{equation}

 Using the spectral decomposition of the Hamiltonian $H$, the oracle $O$ can be written as
 \begin{equation}\label{eq:epecOracle}
  O=\sum_\lambda \ketbra{\lambda}{\lambda}\otimes\left(\ketbra{0}{0}+e^{-ix_\lambda }\ketbra{1}{1}\right), 
 \end{equation}
or, alternatively, $O=\sum_\lambda e^{-i\frac{x_\lambda}{2}} \ketbra{\lambda}{\lambda}\otimes e^{i\frac{x_\lambda}{2}Z}$, which directly implies that Eq.\ \eqref{eq:vphiRT} can be expressed as 
\begin{equation}\label{eq:vphiRT3} 
V_{\boldsymbol{\xi}_k}=\sum_{\lambda} \ket{\lambda}\bra{\lambda}\otimes {R}\left(x_{\lambda},\omega=1/2,{\boldsymbol{\xi}_k}\right).
\end{equation}
Analogously, from Eq.\ \eqref{eq:vphibarRT} it holds 
\begin{equation}\label{eq:vphibarRT3}
\bar{V}_{\boldsymbol{\xi}_k}=\sum_{\lambda} \ket{\lambda}\bra{\lambda}\otimes {R}\left(x_{\lambda},\omega=-1/2,{\boldsymbol{\xi}_k}\right).
\end{equation}

Because of the orthogonality between distinct  eigenvectors of $H$, Eq.\ \eqref{eq:vPhiqsp2} follows straightforwardly from the product of Eq.\ \eqref{eq:WinRT3} and the alternated iteration of the operators in Eqs. \eqref{eq:vphiRT3} and \eqref{eq:vphibarRT3}, as desired.
\end{proof}

Finally, let us make  a last comment about the mapping of Hamiltonian eigenvalues to the convergence interval of the target Fourier series. Notice that taking $t=x_0$ maps any $\lambda\in[-1,1]$ to the convergence interval $[-x_0,x_0]$ of the Fourier series. However, we can map any interval of eigenvalues $\left[\lambda_{+},\lambda_{-}\right]$ to $[-x_0,x_0]$. This is done by replacing the oracle $O$ with the iterate 
  $V_0=\mathds{1}\otimes \ketbra{0}{0}+e^{-i\Lambda}e^{-iHt}\otimes\ketbra{1}{1}$, which can be obtained from $O$ using one query. We can write  $V_0=\sum_\lambda e^{-i\frac{x_\lambda}{2}} \ketbra{\lambda}{\lambda}\otimes e^{i\frac{x_\lambda}{2}Z}$, now with $x_\lambda=\lambda t +\Lambda$, and again a function $\alpha f(\lambda)=g(x_\lambda)$ can be $\varepsilon$-block-encoded into $V_{\boldsymbol{\Phi}}$. The eigenvalues interval $\left[\lambda_{+},\lambda_{-}\right]$ is then mapped into the convergence interval if we take $t=x_0/\Delta\lambda$ and $\Lambda=-x_0\frac{\bar{\lambda}}{\Delta\lambda}$, with $\Delta\lambda=(\lambda_+-\lambda_-)/2$ and $\bar{\lambda}=(\lambda_++\lambda_-)/2$.

\section{Discussion}
\label{sec:Discussions}

We presented a novel quantum signal processing varian{t able to produce any  Fourier series} of a Hermitian operator $H$. The algorithm assumes access to {a Hamiltonian oracle given by a single-qubit controlled version} of the unitary operator $e^{-iHt}$. 
Remarkably, this is the only ancilla required throughout the algorithm. The evolution time $t$ {in} the oracle is fixed {at a tunable value}. By interspersing {oracle calls with parameterized single-qubit rotations on the ancilla}, the target operato{r f}unction is {encoded into} a block of the resulting {joint-system} unitary transformation. The operator function is then {physically realized on the system by a measurement postselection on} the ancilla.

More technically, the problem of {synthesizing a Fourier series of an operator is reduced to applying a single-qubit pulse sequence that realizes the corresponding real-variable series as a matrix element of an $SU(2)$ operator. We provided an explicit efficient classical algorithm for determining the parameters in such sequence from the target Fourier-series coefficients, and rigorously proved that any (normalized) finite Fourier series can be implemented}. Hence, our method is able to implement arbitrarily good approximations to any Hamiltonian function with a finite number of jump discontinuities. The {tunable} evolution time {in the oracle gives us f}reedom to map the range of eigenvalues of $H$ to {an effective interval of interest shorter than the period of the approximation series, which allows us to control the convergence in a practical way and avoid the Gibbs phenomenon at the boundaries}. 

{Another important task we considered is how to} find the best Fourier approximation to be used{, given a fixed target function. We presented two such classical sub-routines for such approximations. The first one is based on Lemma 37 of Ref.\ \cite{vanApeldoorn2020quantumsdpsolvers} and comes guaranteed error bounds. The second one is obtained from an analytic periodic extension of the target function to Fourier-approximate that has guaranteed asymptotic convergence but no closed-form expression for the approximation error. However, while the former is so computationally intensive that it becomes in practice prohibitive already for moderate applications, the former is consistently observed to be numerically very efficient}. 
Moreover, the {second method circumvents the sub-normalization required by the first method, thus leading also to significantly higher post-selection probabilities}.  

{Our findings provide a versatile tool-box for generic operator-function synthesis, relevant to a plethora of modern quantum algorithms, with the advantages of reduced number of ancillary qubits, compatibility with Trotterised Hamiltonian simulations schemes, and relevance both for digital as well as hybrid digital-analog quantum platforms.}

\begin{acknowledgments}
We thank Adrian Perez Salinas and Jose Ignacio Latorre for discussions. We acknowledge financial support from the Serrapilheira Institute (grant number Serra-1709-17173), and the Brazilian agencies CNPq (PQ grant No. 305420/2018-6) and FAPERJ (JCN E-26/202.701/2018).
\end{acknowledgments}

\bibliography{notas}

\begin{thebibliography}{30}%
\makeatletter
\providecommand \@ifxundefined [1]{%
 \@ifx{#1\undefined}
}%
\providecommand \@ifnum [1]{%
 \ifnum #1\expandafter \@firstoftwo
 \else \expandafter \@secondoftwo
 \fi
}%
\providecommand \@ifx [1]{%
 \ifx #1\expandafter \@firstoftwo
 \else \expandafter \@secondoftwo
 \fi
}%
\providecommand \natexlab [1]{#1}%
\providecommand \enquote  [1]{``#1''}%
\providecommand \bibnamefont  [1]{#1}%
\providecommand \bibfnamefont [1]{#1}%
\providecommand \citenamefont [1]{#1}%
\providecommand \href@noop [0]{\@secondoftwo}%
\providecommand \href [0]{\begingroup \@sanitize@url \@href}%
\providecommand \@href[1]{\@@startlink{#1}\@@href}%
\providecommand \@@href[1]{\endgroup#1\@@endlink}%
\providecommand \@sanitize@url [0]{\catcode `\\12\catcode `\$12\catcode
  `\&12\catcode `\#12\catcode `\^12\catcode `\_12\catcode `\%12\relax}%
\providecommand \@@startlink[1]{}%
\providecommand \@@endlink[0]{}%
\providecommand \url  [0]{\begingroup\@sanitize@url \@url }%
\providecommand \@url [1]{\endgroup\@href {#1}{\urlprefix }}%
\providecommand \urlprefix  [0]{URL }%
\providecommand \Eprint [0]{\href }%
\providecommand \doibase [0]{https://doi.org/}%
\providecommand \selectlanguage [0]{\@gobble}%
\providecommand \bibinfo  [0]{\@secondoftwo}%
\providecommand \bibfield  [0]{\@secondoftwo}%
\providecommand \translation [1]{[#1]}%
\providecommand \BibitemOpen [0]{}%
\providecommand \bibitemStop [0]{}%
\providecommand \bibitemNoStop [0]{.\EOS\space}%
\providecommand \EOS [0]{\spacefactor3000\relax}%
\providecommand \BibitemShut  [1]{\csname bibitem#1\endcsname}%
\let\auto@bib@innerbib\@empty
\bibitem [{\citenamefont {Low}\ \emph {et~al.}(2016)\citenamefont {Low},
  \citenamefont {Yoder},\ and\ \citenamefont {Chuang}}]{Low2016PRX}%
  \BibitemOpen
  \bibfield  {author} {\bibinfo {author} {\bibfnamefont {G.~H.}\ \bibnamefont
  {Low}}, \bibinfo {author} {\bibfnamefont {T.~J.}\ \bibnamefont {Yoder}},\
  and\ \bibinfo {author} {\bibfnamefont {I.~L.}\ \bibnamefont {Chuang}},\
  }\bibfield  {title} {\bibinfo {title} {Methodology of resonant equiangular
  composite quantum gates},\ }\href {https://doi.org/10.1103/PhysRevX.6.041067}
  {\bibfield  {journal} {\bibinfo  {journal} {Phys. Rev. X}\ }\textbf {\bibinfo
  {volume} {6}},\ \bibinfo {pages} {041067} (\bibinfo {year}
  {2016})}\BibitemShut {NoStop}%
\bibitem [{\citenamefont {Low}\ and\ \citenamefont
  {Chuang}(2017)}]{low2017optimal}%
  \BibitemOpen
  \bibfield  {author} {\bibinfo {author} {\bibfnamefont {G.~H.}\ \bibnamefont
  {Low}}\ and\ \bibinfo {author} {\bibfnamefont {I.~L.}\ \bibnamefont
  {Chuang}},\ }\bibfield  {title} {\bibinfo {title} {Optimal hamiltonian
  simulation by quantum signal processing},\ }\href
  {https://doi.org/10.1103/PhysRevLett.118.010501} {\bibfield  {journal}
  {\bibinfo  {journal} {Phys. Rev. Lett.}\ }\textbf {\bibinfo {volume} {118}},\
  \bibinfo {pages} {010501} (\bibinfo {year} {2017})}\BibitemShut {NoStop}%
\bibitem [{\citenamefont {Low}\ and\ \citenamefont
  {Chuang}(2019)}]{Low2019hamiltonian}%
  \BibitemOpen
  \bibfield  {author} {\bibinfo {author} {\bibfnamefont {G.~H.}\ \bibnamefont
  {Low}}\ and\ \bibinfo {author} {\bibfnamefont {I.~L.}\ \bibnamefont
  {Chuang}},\ }\bibfield  {title} {\bibinfo {title} {Hamiltonian {S}imulation
  by {Q}ubitization},\ }\href {https://doi.org/10.22331/q-2019-07-12-163}
  {\bibfield  {journal} {\bibinfo  {journal} {{Quantum}}\ }\textbf {\bibinfo
  {volume} {3}},\ \bibinfo {pages} {163} (\bibinfo {year} {2019})}\BibitemShut
  {NoStop}%
\bibitem [{\citenamefont {Gily\'{e}n}\ \emph {et~al.}(2019)\citenamefont
  {Gily\'{e}n}, \citenamefont {Su}, \citenamefont {Low},\ and\ \citenamefont
  {Wiebe}}]{gilyen2019}%
  \BibitemOpen
  \bibfield  {author} {\bibinfo {author} {\bibfnamefont {A.}~\bibnamefont
  {Gily\'{e}n}}, \bibinfo {author} {\bibfnamefont {Y.}~\bibnamefont {Su}},
  \bibinfo {author} {\bibfnamefont {G.~H.}\ \bibnamefont {Low}},\ and\ \bibinfo
  {author} {\bibfnamefont {N.}~\bibnamefont {Wiebe}},\ }\bibfield  {title}
  {\bibinfo {title} {Quantum singular value transformation and beyond:
  Exponential improvements for quantum matrix arithmetics},\ }in\ \href
  {https://doi.org/10.1145/3313276.3316366} {\emph {\bibinfo {booktitle}
  {Proceedings of the 51st Annual ACM SIGACT Symposium on Theory of
  Computing}}},\ \bibinfo {series and number} {STOC 2019}\ (\bibinfo
  {publisher} {Association for Computing Machinery},\ \bibinfo {address} {New
  York, NY, USA},\ \bibinfo {year} {2019})\ p.\ \bibinfo {pages}
  {193}\BibitemShut {NoStop}%
\bibitem [{\citenamefont {Martyn}\ \emph {et~al.}(2021)\citenamefont {Martyn},
  \citenamefont {Rossi}, \citenamefont {Tan},\ and\ \citenamefont
  {Chuang}}]{Martyn2021GrandUni}%
  \BibitemOpen
  \bibfield  {author} {\bibinfo {author} {\bibfnamefont {J.~M.}\ \bibnamefont
  {Martyn}}, \bibinfo {author} {\bibfnamefont {Z.~M.}\ \bibnamefont {Rossi}},
  \bibinfo {author} {\bibfnamefont {A.~K.}\ \bibnamefont {Tan}},\ and\ \bibinfo
  {author} {\bibfnamefont {I.~L.}\ \bibnamefont {Chuang}},\ }\bibfield  {title}
  {\bibinfo {title} {Grand unification of quantum algorithms},\ }\href
  {https://doi.org/10.1103/PRXQuantum.2.040203} {\bibfield  {journal} {\bibinfo
   {journal} {PRX Quantum}\ }\textbf {\bibinfo {volume} {2}},\ \bibinfo {pages}
  {040203} (\bibinfo {year} {2021})}\BibitemShut {NoStop}%
\bibitem [{\citenamefont {de~Lima~Silva}\ \emph {et~al.}(2022)\citenamefont
  {de~Lima~Silva}, \citenamefont {Taddei}, \citenamefont {Carrazza},\ and\
  \citenamefont {Aolita}}]{silva2022fragmented}%
  \BibitemOpen
  \bibfield  {author} {\bibinfo {author} {\bibfnamefont {T.}~\bibnamefont
  {de~Lima~Silva}}, \bibinfo {author} {\bibfnamefont {M.~M.}\ \bibnamefont
  {Taddei}}, \bibinfo {author} {\bibfnamefont {S.}~\bibnamefont {Carrazza}},\
  and\ \bibinfo {author} {\bibfnamefont {L.}~\bibnamefont {Aolita}},\
  }\href@noop {} {\bibinfo {title} {Fragmented imaginary-time evolution for
  intermediate-scale quantum signal processors}} (\bibinfo {year} {2022}),\
  \Eprint {https://arxiv.org/abs/2110.13180} {arXiv:2110.13180 [quant-ph]}
  \BibitemShut {NoStop}%
\bibitem [{\citenamefont {Gribling}\ \emph {et~al.}(2021)\citenamefont
  {Gribling}, \citenamefont {Kerenidis},\ and\ \citenamefont
  {Szil{\'a}gyi}}]{gribling2021improving}%
  \BibitemOpen
  \bibfield  {author} {\bibinfo {author} {\bibfnamefont {S.}~\bibnamefont
  {Gribling}}, \bibinfo {author} {\bibfnamefont {I.}~\bibnamefont
  {Kerenidis}},\ and\ \bibinfo {author} {\bibfnamefont {D.}~\bibnamefont
  {Szil{\'a}gyi}},\ }\bibfield  {title} {\bibinfo {title} {Improving quantum
  linear system solvers via a gradient descent perspective},\ }\href@noop {}
  {\bibfield  {journal} {\bibinfo  {journal} {arXiv preprint arXiv:2109.04248}\
  } (\bibinfo {year} {2021})}\BibitemShut {NoStop}%
\bibitem [{\citenamefont {Harrow}\ \emph {et~al.}(2009)\citenamefont {Harrow},
  \citenamefont {Hassidim},\ and\ \citenamefont {Lloyd}}]{HHL2009}%
  \BibitemOpen
  \bibfield  {author} {\bibinfo {author} {\bibfnamefont {A.~W.}\ \bibnamefont
  {Harrow}}, \bibinfo {author} {\bibfnamefont {A.}~\bibnamefont {Hassidim}},\
  and\ \bibinfo {author} {\bibfnamefont {S.}~\bibnamefont {Lloyd}},\ }\bibfield
   {title} {\bibinfo {title} {Quantum algorithm for linear systems of
  equations},\ }\href {https://doi.org/10.1103/PhysRevLett.103.150502}
  {\bibfield  {journal} {\bibinfo  {journal} {Phys. Rev. Lett.}\ }\textbf
  {\bibinfo {volume} {103}},\ \bibinfo {pages} {150502} (\bibinfo {year}
  {2009})}\BibitemShut {NoStop}%
\bibitem [{\citenamefont {Ambainis}(2012)}]{ambainis2012}%
  \BibitemOpen
  \bibfield  {author} {\bibinfo {author} {\bibfnamefont {A.}~\bibnamefont
  {Ambainis}},\ }\bibfield  {title} {\bibinfo {title} {Variable time amplitude
  amplification and quantum algorithms for linear algebra problems},\ }in\
  \href {https://doi.org/10.48550/arXiv.1010.4458} {\emph {\bibinfo {booktitle}
  {29th Symposium on Theoretical Aspects of Computer Science}}},\ \bibinfo
  {series and number} {STACS12}\ (\bibinfo {address} {Paris},\ \bibinfo {year}
  {2012})\ pp.\ \bibinfo {pages} {636--647}\BibitemShut {NoStop}%
\bibitem [{\citenamefont {Childs}\ \emph {et~al.}(2017)\citenamefont {Childs},
  \citenamefont {Kothari},\ and\ \citenamefont {Somma}}]{Childs2017}%
  \BibitemOpen
  \bibfield  {author} {\bibinfo {author} {\bibfnamefont {A.~M.}\ \bibnamefont
  {Childs}}, \bibinfo {author} {\bibfnamefont {R.}~\bibnamefont {Kothari}},\
  and\ \bibinfo {author} {\bibfnamefont {R.~D.}\ \bibnamefont {Somma}},\
  }\bibfield  {title} {\bibinfo {title} {Quantum algorithm for systems of
  linear equations with exponentially improved dependence on precision},\
  }\href {https://doi.org/10.1137/16m1087072} {\bibfield  {journal} {\bibinfo
  {journal} {SIAM Journal on Computing}\ }\textbf {\bibinfo {volume} {46}},\
  \bibinfo {pages} {1920} (\bibinfo {year} {2017})}\BibitemShut {NoStop}%
\bibitem [{\citenamefont {Chowdhury}\ and\ \citenamefont
  {Somma}(2017)}]{chowdhury2017}%
  \BibitemOpen
  \bibfield  {author} {\bibinfo {author} {\bibfnamefont {A.~N.}\ \bibnamefont
  {Chowdhury}}\ and\ \bibinfo {author} {\bibfnamefont {R.~D.}\ \bibnamefont
  {Somma}},\ }\bibfield  {title} {\bibinfo {title} {Quantum algorithms for
  gibbs sampling and hitting-time estimation},\ }\href@noop {} {\bibfield
  {journal} {\bibinfo  {journal} {Quantum Inf. Comput.}\ }\textbf {\bibinfo
  {volume} {17}},\ \bibinfo {pages} {41} (\bibinfo {year} {2017})}\BibitemShut
  {NoStop}%
\bibitem [{\citenamefont {van Apeldoorn}\ \emph {et~al.}(2020)\citenamefont
  {van Apeldoorn}, \citenamefont {Gily{\'{e}}n}, \citenamefont {Gribling},\
  and\ \citenamefont {de~Wolf}}]{vanApeldoorn2020quantumsdpsolvers}%
  \BibitemOpen
  \bibfield  {author} {\bibinfo {author} {\bibfnamefont {J.}~\bibnamefont {van
  Apeldoorn}}, \bibinfo {author} {\bibfnamefont {A.}~\bibnamefont
  {Gily{\'{e}}n}}, \bibinfo {author} {\bibfnamefont {S.}~\bibnamefont
  {Gribling}},\ and\ \bibinfo {author} {\bibfnamefont {R.}~\bibnamefont
  {de~Wolf}},\ }\bibfield  {title} {\bibinfo {title} {Quantum {SDP}-{S}olvers:
  {B}etter upper and lower bounds},\ }\href
  {https://doi.org/10.22331/q-2020-02-14-230} {\bibfield  {journal} {\bibinfo
  {journal} {{Quantum}}\ }\textbf {\bibinfo {volume} {4}},\ \bibinfo {pages}
  {230} (\bibinfo {year} {2020})}\BibitemShut {NoStop}%
\bibitem [{\citenamefont {Chakraborty}\ \emph {et~al.}(2019)\citenamefont
  {Chakraborty}, \citenamefont {Gily{\'e}n},\ and\ \citenamefont
  {Jeffery}}]{shantanav2019}%
  \BibitemOpen
  \bibfield  {author} {\bibinfo {author} {\bibfnamefont {S.}~\bibnamefont
  {Chakraborty}}, \bibinfo {author} {\bibfnamefont {A.}~\bibnamefont
  {Gily{\'e}n}},\ and\ \bibinfo {author} {\bibfnamefont {S.}~\bibnamefont
  {Jeffery}},\ }\bibfield  {title} {\bibinfo {title} {The power of
  block-encoded matrix powers: Improved regression techniques via faster
  hamiltonian simulation},\ }in\ \href
  {https://doi.org/10.4230/LIPIcs.ICALP.2019.33} {\emph {\bibinfo {booktitle}
  {46th International Colloquium on Automata, Languages, and Programming (ICALP
  2019)}}},\ \bibinfo {series} {Leibniz International Proceedings in
  Informatics (LIPIcs)}, Vol.\ \bibinfo {volume} {132},\ \bibinfo {editor}
  {edited by\ \bibinfo {editor} {\bibfnamefont {C.}~\bibnamefont {Baier}},
  \bibinfo {editor} {\bibfnamefont {I.}~\bibnamefont {Chatzigiannakis}},
  \bibinfo {editor} {\bibfnamefont {P.}~\bibnamefont {Flocchini}},\ and\
  \bibinfo {editor} {\bibfnamefont {S.}~\bibnamefont {Leonardi}}}\ (\bibinfo
  {publisher} {Schloss Dagstuhl--Leibniz-Zentrum fuer Informatik},\ \bibinfo
  {address} {Dagstuhl, Germany},\ \bibinfo {year} {2019})\ pp.\ \bibinfo
  {pages} {33:1--33:14}\BibitemShut {NoStop}%
\bibitem [{\citenamefont {Childs}\ and\ \citenamefont
  {Wiebe}(2012)}]{Childs2012}%
  \BibitemOpen
  \bibfield  {author} {\bibinfo {author} {\bibfnamefont {A.~M.}\ \bibnamefont
  {Childs}}\ and\ \bibinfo {author} {\bibfnamefont {N.}~\bibnamefont {Wiebe}},\
  }\bibfield  {title} {\bibinfo {title} {Hamiltonian simulation using linear
  combinations of unitary operations},\ }\href@noop {} {\bibfield  {journal}
  {\bibinfo  {journal} {Quantum Info. Comput.}\ }\textbf {\bibinfo {volume}
  {12}},\ \bibinfo {pages} {901} (\bibinfo {year} {2012})}\BibitemShut
  {NoStop}%
\bibitem [{\citenamefont {Berry}\ \emph {et~al.}(2015)\citenamefont {Berry},
  \citenamefont {Childs},\ and\ \citenamefont {Kothari}}]{Berry2015a}%
  \BibitemOpen
  \bibfield  {author} {\bibinfo {author} {\bibfnamefont {D.~W.}\ \bibnamefont
  {Berry}}, \bibinfo {author} {\bibfnamefont {A.~M.}\ \bibnamefont {Childs}},\
  and\ \bibinfo {author} {\bibfnamefont {R.}~\bibnamefont {Kothari}},\
  }\bibfield  {title} {\bibinfo {title} {{Hamiltonian simulation with nearly
  optimal dependence on all parameters}},\ }\href
  {https://doi.org/10.1109/FOCS.2015.54} {\bibfield  {journal} {\bibinfo
  {journal} {2015 IEEE 56th Annual Symposium on Foundations of Computer
  Science}\ ,\ \bibinfo {pages} {792}} (\bibinfo {year} {2015})},\ \Eprint
  {https://arxiv.org/abs/1501.01715} {arXiv:1501.01715} \BibitemShut {NoStop}%
\bibitem [{\citenamefont {Osborne}(2012)}]{Osborne2012}%
  \BibitemOpen
  \bibfield  {author} {\bibinfo {author} {\bibfnamefont {T.~J.}\ \bibnamefont
  {Osborne}},\ }\bibfield  {title} {\bibinfo {title} {Hamiltonian complexity},\
  }\href {https://doi.org/10.1088/0034-4885/75/2/022001} {\bibfield  {journal}
  {\bibinfo  {journal} {Reports on Progress in Physics}\ }\textbf {\bibinfo
  {volume} {75}},\ \bibinfo {pages} {022001} (\bibinfo {year}
  {2012})}\BibitemShut {NoStop}%
\bibitem [{\citenamefont {Lloyd}(1996)}]{lloyd1996universal}%
  \BibitemOpen
  \bibfield  {author} {\bibinfo {author} {\bibfnamefont {S.}~\bibnamefont
  {Lloyd}},\ }\bibfield  {title} {\bibinfo {title} {Universal quantum
  simulators},\ }\href@noop {} {\bibfield  {journal} {\bibinfo  {journal}
  {Science}\ }\textbf {\bibinfo {volume} {273}},\ \bibinfo {pages} {1073}
  (\bibinfo {year} {1996})}\BibitemShut {NoStop}%
\bibitem [{\citenamefont {Campbell}(2019)}]{campbell2019}%
  \BibitemOpen
  \bibfield  {author} {\bibinfo {author} {\bibfnamefont {E.}~\bibnamefont
  {Campbell}},\ }\bibfield  {title} {\bibinfo {title} {Random compiler for fast
  hamiltonian simulation},\ }\href
  {https://doi.org/10.1103/PhysRevLett.123.070503} {\bibfield  {journal}
  {\bibinfo  {journal} {Phys. Rev. Lett.}\ }\textbf {\bibinfo {volume} {123}},\
  \bibinfo {pages} {070503} (\bibinfo {year} {2019})}\BibitemShut {NoStop}%
\bibitem [{\citenamefont {Childs}\ \emph {et~al.}(2019)\citenamefont {Childs},
  \citenamefont {Ostrander},\ and\ \citenamefont {Su}}]{COS19}%
  \BibitemOpen
  \bibfield  {author} {\bibinfo {author} {\bibfnamefont {A.~M.}\ \bibnamefont
  {Childs}}, \bibinfo {author} {\bibfnamefont {A.}~\bibnamefont {Ostrander}},\
  and\ \bibinfo {author} {\bibfnamefont {Y.}~\bibnamefont {Su}},\ }\bibfield
  {title} {\bibinfo {title} {Faster quantum simulation by randomization},\
  }\href@noop {} {\bibfield  {journal} {\bibinfo  {journal} {Quantum}\ }\textbf
  {\bibinfo {volume} {3}} (\bibinfo {year} {2019})}\BibitemShut {NoStop}%
\bibitem [{\citenamefont {Arrazola}\ \emph {et~al.}(2016)\citenamefont
  {Arrazola}, \citenamefont {Pedernales}, \citenamefont {Lamata},\ and\
  \citenamefont {Solano}}]{Arrazola2016}%
  \BibitemOpen
  \bibfield  {author} {\bibinfo {author} {\bibfnamefont {I.}~\bibnamefont
  {Arrazola}}, \bibinfo {author} {\bibfnamefont {J.~S.}\ \bibnamefont
  {Pedernales}}, \bibinfo {author} {\bibfnamefont {L.}~\bibnamefont {Lamata}},\
  and\ \bibinfo {author} {\bibfnamefont {E.}~\bibnamefont {Solano}},\
  }\bibfield  {title} {\bibinfo {title} {Digital-analog quantum simulation of
  spin models in trapped ions},\ }\href {https://doi.org/10.1038/srep30534}
  {\bibfield  {journal} {\bibinfo  {journal} {Scientific Reports}\ }\textbf
  {\bibinfo {volume} {6}},\ \bibinfo {pages} {30534} (\bibinfo {year}
  {2016})}\BibitemShut {NoStop}%
\bibitem [{\citenamefont {Parra-Rodriguez}\ \emph {et~al.}(2020)\citenamefont
  {Parra-Rodriguez}, \citenamefont {Lougovski}, \citenamefont {Lamata},
  \citenamefont {Solano},\ and\ \citenamefont {Sanz}}]{Parra-Rodriguez2020}%
  \BibitemOpen
  \bibfield  {author} {\bibinfo {author} {\bibfnamefont {A.}~\bibnamefont
  {Parra-Rodriguez}}, \bibinfo {author} {\bibfnamefont {P.}~\bibnamefont
  {Lougovski}}, \bibinfo {author} {\bibfnamefont {L.}~\bibnamefont {Lamata}},
  \bibinfo {author} {\bibfnamefont {E.}~\bibnamefont {Solano}},\ and\ \bibinfo
  {author} {\bibfnamefont {M.}~\bibnamefont {Sanz}},\ }\bibfield  {title}
  {\bibinfo {title} {Digital-analog quantum computation},\ }\href
  {https://doi.org/10.1103/PhysRevA.101.022305} {\bibfield  {journal} {\bibinfo
   {journal} {Phys. Rev. A}\ }\textbf {\bibinfo {volume} {101}},\ \bibinfo
  {pages} {022305} (\bibinfo {year} {2020})}\BibitemShut {NoStop}%
\bibitem [{\citenamefont {Gonzalez-Raya}\ \emph {et~al.}(2021)\citenamefont
  {Gonzalez-Raya}, \citenamefont {Asensio-Perea}, \citenamefont {Martin},
  \citenamefont {C\'eleri}, \citenamefont {Sanz}, \citenamefont {Lougovski},\
  and\ \citenamefont {Dumitrescu}}]{Gonzalez-Raya2021}%
  \BibitemOpen
  \bibfield  {author} {\bibinfo {author} {\bibfnamefont {T.}~\bibnamefont
  {Gonzalez-Raya}}, \bibinfo {author} {\bibfnamefont {R.}~\bibnamefont
  {Asensio-Perea}}, \bibinfo {author} {\bibfnamefont {A.}~\bibnamefont
  {Martin}}, \bibinfo {author} {\bibfnamefont {L.~C.}\ \bibnamefont
  {C\'eleri}}, \bibinfo {author} {\bibfnamefont {M.}~\bibnamefont {Sanz}},
  \bibinfo {author} {\bibfnamefont {P.}~\bibnamefont {Lougovski}},\ and\
  \bibinfo {author} {\bibfnamefont {E.~F.}\ \bibnamefont {Dumitrescu}},\
  }\bibfield  {title} {\bibinfo {title} {Digital-analog quantum simulations
  using the cross-resonance effect},\ }\href
  {https://doi.org/10.1103/PRXQuantum.2.020328} {\bibfield  {journal} {\bibinfo
   {journal} {PRX Quantum}\ }\textbf {\bibinfo {volume} {2}},\ \bibinfo {pages}
  {020328} (\bibinfo {year} {2021})}\BibitemShut {NoStop}%
\bibitem [{\citenamefont {Lloyd}\ \emph {et~al.}(2021)\citenamefont {Lloyd},
  \citenamefont {Kiani}, \citenamefont {Arvidsson-Shukur}, \citenamefont
  {Bosch}, \citenamefont {Palma}, \citenamefont {Kaminsky}, \citenamefont
  {Liu},\ and\ \citenamefont {Marvian}}]{HamiltonianQSP21}%
  \BibitemOpen
  \bibfield  {author} {\bibinfo {author} {\bibfnamefont {S.}~\bibnamefont
  {Lloyd}}, \bibinfo {author} {\bibfnamefont {B.~T.}\ \bibnamefont {Kiani}},
  \bibinfo {author} {\bibfnamefont {D.~R.~M.}\ \bibnamefont
  {Arvidsson-Shukur}}, \bibinfo {author} {\bibfnamefont {S.}~\bibnamefont
  {Bosch}}, \bibinfo {author} {\bibfnamefont {G.~D.}\ \bibnamefont {Palma}},
  \bibinfo {author} {\bibfnamefont {W.~M.}\ \bibnamefont {Kaminsky}}, \bibinfo
  {author} {\bibfnamefont {Z.-W.}\ \bibnamefont {Liu}},\ and\ \bibinfo {author}
  {\bibfnamefont {M.}~\bibnamefont {Marvian}},\ }\bibfield  {title} {\bibinfo
  {title} {Hamiltonian singular value transformation and inverse block
  encoding},\ }\href@noop {} {\bibfield  {journal} {\bibinfo  {journal} {ArXiv:
  2104.01410}\ } (\bibinfo {year} {2021})}\BibitemShut {NoStop}%
\bibitem [{\citenamefont {P\'erez-Salinas}\ \emph {et~al.}(2021)\citenamefont
  {P\'erez-Salinas}, \citenamefont {L\'opez-N\'u\~nez}, \citenamefont
  {Garc\'{\i}a-S\'aez}, \citenamefont {Forn-D\'{\i}az},\ and\ \citenamefont
  {Latorre}}]{PerezSalinas2021}%
  \BibitemOpen
  \bibfield  {author} {\bibinfo {author} {\bibfnamefont {A.}~\bibnamefont
  {P\'erez-Salinas}}, \bibinfo {author} {\bibfnamefont {D.}~\bibnamefont
  {L\'opez-N\'u\~nez}}, \bibinfo {author} {\bibfnamefont {A.}~\bibnamefont
  {Garc\'{\i}a-S\'aez}}, \bibinfo {author} {\bibfnamefont {P.}~\bibnamefont
  {Forn-D\'{\i}az}},\ and\ \bibinfo {author} {\bibfnamefont {J.~I.}\
  \bibnamefont {Latorre}},\ }\bibfield  {title} {\bibinfo {title} {One qubit as
  a universal approximant},\ }\href
  {https://doi.org/10.1103/PhysRevA.104.012405} {\bibfield  {journal} {\bibinfo
   {journal} {Phys. Rev. A}\ }\textbf {\bibinfo {volume} {104}},\ \bibinfo
  {pages} {012405} (\bibinfo {year} {2021})}\BibitemShut {NoStop}%
\bibitem [{\citenamefont {Haah}(2019)}]{Haah2019product}%
  \BibitemOpen
  \bibfield  {author} {\bibinfo {author} {\bibfnamefont {J.}~\bibnamefont
  {Haah}},\ }\bibfield  {title} {\bibinfo {title} {Product {D}ecomposition of
  {P}eriodic {F}unctions in {Q}uantum {S}ignal {P}rocessing},\ }\href
  {https://doi.org/10.22331/q-2019-10-07-190} {\bibfield  {journal} {\bibinfo
  {journal} {{Quantum}}\ }\textbf {\bibinfo {volume} {3}},\ \bibinfo {pages}
  {190} (\bibinfo {year} {2019})}\BibitemShut {NoStop}%
\bibitem [{\citenamefont {Boyd}(2002)}]{Boyd2002}%
  \BibitemOpen
  \bibfield  {author} {\bibinfo {author} {\bibfnamefont {J.~P.}\ \bibnamefont
  {Boyd}},\ }\bibfield  {title} {\bibinfo {title} {A comparison of numerical
  algorithms for fourier extension of the first, second, and third kinds},\
  }\href {https://doi.org/https://doi.org/10.1006/jcph.2002.7023} {\bibfield
  {journal} {\bibinfo  {journal} {Journal of Computational Physics}\ }\textbf
  {\bibinfo {volume} {178}},\ \bibinfo {pages} {118} (\bibinfo {year}
  {2002})}\BibitemShut {NoStop}%
\bibitem [{\citenamefont {Chao}\ \emph {et~al.}(2020)\citenamefont {Chao},
  \citenamefont {Ding}, \citenamefont {Gilyen}, \citenamefont {Huang},\ and\
  \citenamefont {Szegedy}}]{chao2020finding}%
  \BibitemOpen
  \bibfield  {author} {\bibinfo {author} {\bibfnamefont {R.}~\bibnamefont
  {Chao}}, \bibinfo {author} {\bibfnamefont {D.}~\bibnamefont {Ding}}, \bibinfo
  {author} {\bibfnamefont {A.}~\bibnamefont {Gilyen}}, \bibinfo {author}
  {\bibfnamefont {C.}~\bibnamefont {Huang}},\ and\ \bibinfo {author}
  {\bibfnamefont {M.}~\bibnamefont {Szegedy}},\ }\href@noop {} {\bibinfo
  {title} {Finding angles for quantum signal processing with machine
  precision}} (\bibinfo {year} {2020}),\ \Eprint
  {https://arxiv.org/abs/2003.02831} {arXiv:2003.02831 [quant-ph]} \BibitemShut
  {NoStop}%
\bibitem [{\citenamefont {Webb}\ \emph {et~al.}(2018)\citenamefont {Webb},
  \citenamefont {Coppé},\ and\ \citenamefont {Huybrechs}}]{Webb2018}%
  \BibitemOpen
  \bibfield  {author} {\bibinfo {author} {\bibfnamefont {M.}~\bibnamefont
  {Webb}}, \bibinfo {author} {\bibfnamefont {V.}~\bibnamefont {Coppé}},\ and\
  \bibinfo {author} {\bibfnamefont {D.}~\bibnamefont {Huybrechs}},\ }\href
  {https://doi.org/10.48550/ARXIV.1811.09527} {\bibinfo {title} {Pointwise and
  uniform convergence of fourier extensions}} (\bibinfo {year}
  {2018})\BibitemShut {NoStop}%
\bibitem [{BSe()}]{BSearchNote}%
  \BibitemOpen
  \href@noop {} {}\bibinfo {note} {Binary search is an algorithm for searching
  an element in a sorted array by repeatedly dividing the search space by half
  in each iteration. The basic step of the algorithm consists in picking the
  element halfway the array and testing the target of the search against it.
  Assuming ascending ordering, if the target is bigger than the mid-element,
  owing to the fact that the array is sorted, one can ignore the lower-half and
  repeat the process. If the target is smaller than the said element, one
  ignores the upper-half and again, repeats the process. Convergence happens
  after at most $\lceil \log_{2}(n) \rceil$ steps, where n is the size of the
  array. Note that the naive one-by-one comparison used in the linear search
  requires $n$ steps in the worst-case scenario.}\BibitemShut {Stop}%
\bibitem [{\citenamefont {Dong}\ \emph {et~al.}(2020)\citenamefont {Dong},
  \citenamefont {Meng}, \citenamefont {Whaley},\ and\ \citenamefont
  {Lin}}]{dong2020efficient}%
  \BibitemOpen
  \bibfield  {author} {\bibinfo {author} {\bibfnamefont {Y.}~\bibnamefont
  {Dong}}, \bibinfo {author} {\bibfnamefont {X.}~\bibnamefont {Meng}}, \bibinfo
  {author} {\bibfnamefont {K.~B.}\ \bibnamefont {Whaley}},\ and\ \bibinfo
  {author} {\bibfnamefont {L.}~\bibnamefont {Lin}},\ }\href@noop {} {\bibinfo
  {title} {Efficient phase factor evaluation in quantum signal processing}}
  (\bibinfo {year} {2020}),\ \Eprint {https://arxiv.org/abs/2002.11649}
  {arXiv:2002.11649 [quant-ph]} \BibitemShut {NoStop}%
\end{thebibliography}%

\appendix

\section{Another algorithm for operator function design using Fourier series}
\label{app:proofFourier} 

In this appendix we show that it is possible to implement  Fourier series of Hermitian operators using the standard QSP pulses. Nevertheless, this approach has achievability guaranteed only for real Fourier series. Moreover, it requires one extra qubit ancilla and extra subnormalization of $1/2$.

\subsection{Real-function design with single-qubit rotations}
\label{sec:pulses}

We start by reviewing the design of functions of one real variable with single-qubit pulses presented in Ref.\ \cite{Low2016PRX}.

The basic  single qubit rotation in this method is given by
$R(x,\phi)=e^{i x {X}}e^{i\phi{Z}}$, where ${X}$
and ${Z}$ are the first and third Pauli matrices, respectively,
and $\phi\in[0,2\pi]$. The angle $x\in[-\pi,\pi]$
is the signal to be processed and the rotation $e^{i x {X}}$
is now the iterate. For a given even number $q$, the sequence of rotations $\mathcal{R}\left(x,{\boldsymbol{\phi}}\right)=e^{i\phi_{q+1}{Z}}\prod_{k=1}^{q/2}R(- x,\phi_{2k})R( x,\phi_{2k-1})$
 results in an operator with matrix representation \cite{Low2016PRX}
\begin{equation}
\mathcal{R}\left(x,{\boldsymbol{\phi}}\right)=\left(\begin{array}{cc}
B(\cos x) & i\,\sin x\,D(\cos x)\\
i\,\sin x\,D^*(\cos x) & B^{*}(\cos x)
\end{array}\right),\label{eq:poly}
\end{equation}
whose entries contain polynomials $B$ and $D$ in $\cos x$ with complex coefficients
determined by the sequence of rotation angles ${\boldsymbol{\phi}}=\left(\phi_{1},\cdots,\phi_{q+1}\right)\in\mathbb{R}^{q+1}$. 

Given target real functions $\mathscr{B}(x )$ and $\mathscr{D}( x)$ satisfying 
\begin{equation}\label{achievabilityCond1}
 \mathscr{B}^{2}(x)+\sin^2 x \,\mathscr{D}^{2}( x)\leq1
\end{equation}
for all $x$ and having the form 
\begin{equation} \label{achievabilityCond2}
\begin{split}
  \mathscr{B}(x)& =\sum_{k=0}^{q/2}b_k\cos(2k x)\\
  \sin x\,\mathscr{D}( x)& =\sum_{k=1}^{q/2} d_{k}\sin\left(2k x\right),
  \end{split}
\end{equation}
with  $b_k$ and $d_k$  arbitrary real coefficients, then there is a pulse sequence ${\boldsymbol{\phi}}$ that generates $B(\cos x)$ and $D(\cos x)$ with $\mathscr{B}(x )$ and $\mathscr{D}(x)$ as either their real or imaginary parts, respectively.

Given that the desired polynomials are achievable by QSP, th{e r}otation angles ${\boldsymbol{\phi}}$ can be computed classically in time
$\mathcal{O}\left(\text{poly}(q)\right)$ \citep{Low2016PRX,Haah2019product,chao2020finding,dong2020efficient}.

\subsection{Multiqubit operator-function design}

We again assume that the Hermitian operator $H$ is encoded in the real-time evolution oracle of Def. \ref{def:real_t_or} and show how to apply the QSP pulses of Sec.\ \ref{sec:pulses} in order to implement an operator  Fourier series. As the pulses already produce sine and cosine series, we use an extra ancilla denoted by $\mathcal{A}_P$ to combine the two opposite parity series into an arbitrary real Fourier series.   Denoting by $\mathcal{A}_O$ the oracle ancilla, the total Hilbert space for the ancillas is now $\mathbb{H}_{\mathcal{A}}=\mathbb{H}_{\mathcal{A}_O}\otimes \mathbb{H}_{\mathcal{A}_P}$.

 Defining $O'=(\mathds{1}\otimes M) O (\mathds{1}\otimes M)$, the basic QSP blocks $\mathcal{V}_{\phi^{(j)}}=O'(\mathds{1}\otimes e^{i\phi^{(j)} {Z}})$ and $\bar{\mathcal{V}}_{\phi^{(j)}}=O'^\dagger(\mathds{1}\otimes e^{i\phi^{(j)} {Z}})$ can be obtained from one call to the oracle $O$ or its inverse. Here the qubit rotations are applied on the oracle ancilla. Depending on the preparation and post-selection of this ancilla, Eq. \eqref{achievabilityCond2}  allows for a sine or a cosine series on $x_\lambda=\lambda t$  by iterating $\mathcal{V}_{\phi^{(j)}}$. It means that only series with well defined parity can be implemented. The addition of the second qubit ancilla makes possible to implement a series with undefined parity. With $\boldsymbol{V}_{{\boldsymbol{\phi}}}=W_{\text{out}}\bar{V}_{\phi_q}{V}_{\phi_{q-1}}\cdots\bar{V}_{\phi_2}{V}_{\phi_{1}}W_{\text{in}}$, for this algorithm we take input and output ancillary unitaries as
\begin{equation}\label{eq:WinRT2}
W_{\text{in}}=\mathds{1}\otimes M\otimes M,
\end{equation}
where $M$ is a qubit Hadamard gate, and 
\begin{equation}\label{eq:WoutRT2}
\begin{split}
W_{\text{out}}=W_{\text{in}}&\left[\mathds{1}\otimes\left( e^{i\phi_{q+1}^{(c)} {Z}}\otimes \ket{0}\bra{0} \right.\right. \\
&\qquad\quad+\left.\left.( {Z} e^{i\phi_{q+1}^{(s)} {Z}})\otimes \ket{1}\bra{1}\right)\right],
\end{split}
\end{equation}
and the basic QSP blocks are defined as
 \begin{equation}\label{eq:vphiRT2}
  {V}_{\phi_k}=\mathcal{V}_{\phi_k^{(c)}}\otimes\ket{0}\bra{0}+\mathcal{V}_{\phi_k^{(s)}}\otimes \ket{1}\bra{1},
 \end{equation}
 \begin{equation}\label{eq:vphibarRT2}
  \bar{V}_{\phi_k}=\bar{\mathcal{V}}_{\phi_k^{(c)}}\otimes\ket{0}\bra{0}+\bar{\mathcal{V}}_{\phi_k^{(s)}}\otimes \ket{1}\bra{1},
 \end{equation}
with $\phi_k=(\phi_k^{(c)},\phi_k^{(s)})$.  The following lemma summarizes the method:

\begin{lem}\label{fourier2} (Fourier series from real-time evolution oracles - second approach) Let $\tilde{g}_q:[-\pi,\pi]\rightarrow\mathbb{R}$ be  the Fourier series $\tilde{g}_q(x)=b_0+\sum_{k=1}^{q/2} \left(b_k \cos k x +d_k\sin k x\right)$, with $|\tilde{g}_q(x)|\leq1$ for all $x \in [-\pi, \pi]$. Then there is a pulse sequence ${\boldsymbol{\phi}}=( \phi_1, \cdots,\phi_{q+1} )\in\mathbb{R}^{2q+2}$, with $\phi_k=({\phi}_k^{(c)},{\phi}_k^{(s)})$, such that the operator $\boldsymbol{V}_{{\boldsymbol{\phi}}}$ on $\mathbb{H}_{\mathcal{SA}}$  with $W_{\text{in}}$, $W_{\text{out}}$, ${V}_{\phi_k}$ and $\bar{V}_{\phi_k}$ given by Eqs. \eqref{eq:WinRT2}, \eqref{eq:WoutRT2}, \eqref{eq:vphiRT2}, and \eqref{eq:vphibarRT2}, respectively, is a perfect block-encoding of $\tilde{f}_q[H]{:}=\frac{1}{2} \tilde{g}_q[Ht]$, i.{e.} 
\begin{equation}
\bra{0}\bra{0}\boldsymbol{V}_{{\boldsymbol{\phi}}}\ket{0}\ket{0}=\sum_{\lambda}\tilde{f}_q(\lambda)\,\ket{\lambda}\bra{\lambda}.\label{eq:rtrealfunction2}
\end{equation}
Moreover, the pulse sequence can be obtained classically in time $\mathcal{O}(\text{poly}(q))$.
\end{lem}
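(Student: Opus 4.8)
The plan is to follow the same route as the proof of Theorem \ref{fourier}: pass to the eigenbasis of $H$ so that $\boldsymbol V_{\boldsymbol\phi}$ decomposes into commuting blocks indexed by the eigenvalue $\lambda$, recognise each such block as the single-qubit QSP of Sec.\ \ref{sec:pulses} run in parallel in the two parity sectors selected by the extra ancilla $\mathcal A_P$, and then use the Hadamards in $W_{\text{in}}$ to combine the even (cosine) and odd (sine) sectors into the full series $\tfrac12\tilde g_q$.

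First I would write the oracle spectrally, $O=\sum_\lambda e^{-ix_\lambda/2}\ketbra{\lambda}{\lambda}\otimes e^{i(x_\lambda/2)Z}$ with $x_\lambda=\lambda t$, so that conjugating the oracle ancilla by Hadamards gives $O'=\sum_\lambda e^{-ix_\lambda/2}\ketbra{\lambda}{\lambda}\otimes e^{i(x_\lambda/2)X}$. Hence each basic block $\mathcal V_\phi=O'(\mathds{1}\otimes e^{i\phi Z})$ acts on the $\lambda$-eigenspace as $e^{-ix_\lambda/2}R(x_\lambda/2,\phi)$ with $R(y,\phi)=e^{iyX}e^{i\phi Z}$ exactly the building block of Sec.\ \ref{sec:pulses}, and $\bar{\mathcal V}_\phi$ as $e^{ix_\lambda/2}R(-x_\lambda/2,\phi)$. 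In the product $\boldsymbol V_{\boldsymbol\phi}=W_{\text{out}}\bar V_{\phi_q}V_{\phi_{q-1}}\cdots V_{\phi_1}W_{\text{in}}$ the $q/2$ factors of each type appear alternately, so the scalars $e^{\mp ix_\lambda/2}$ cancel in pairs and the alternation $R(-y,\cdot)R(y,\cdot)$ reproduces the sequence $\mathcal R(y,\boldsymbol\phi)$ of Eq.\ \eqref{eq:poly}. Since $V_{\phi_k}$ and $\bar V_{\phi_k}$ are block-diagonal on $\mathcal A_P$ in its computational basis, with the string $\{\phi_k^{(c)}\}$ on $\ket{0}_{\mathcal A_P}$ and $\{\phi_k^{(s)}\}$ on $\ket{1}_{\mathcal A_P}$, the $\ket{0}_{\mathcal A_P}$ sector runs the QSP sequence with angles $\boldsymbol\phi^{(c)}$ and the $\ket{1}_{\mathcal A_P}$ sector with $\boldsymbol\phi^{(s)}$, with signal $y=x_\lambda/2$ in both.

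Next I would invoke the achievability statement recalled in Sec.\ \ref{sec:pulses}. Decompose the target real series by parity, $\tilde g_q(x)=\mathscr{E}(x)+\mathscr{O}(x)$ with $\mathscr{E}(x)=b_0+\sum_{k=1}^{q/2}b_k\cos(kx)$ even and $\mathscr{O}(x)=\sum_{k=1}^{q/2}d_k\sin(kx)$ odd; setting $\mathscr{B}(y):=\mathscr{E}(2y)=\sum_k b_k\cos(2ky)$ and $\sin y\,\mathscr{D}(y):=\mathscr{O}(2y)=\sum_k d_k\sin(2ky)$ puts them into the admissible forms of Eq.\ \eqref{achievabilityCond2}. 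The scalar constraint Eq.\ \eqref{achievabilityCond1}, $\mathscr{B}(y)^2+\sin^2y\,\mathscr{D}(y)^2\le1$, follows from $|\tilde g_q|\le1$: all frequencies are integers so $\tilde g_q$ is $2\pi$-periodic, and by parity $\mathscr{B}(y)^2+\sin^2y\,\mathscr{D}(y)^2=\tfrac12\big(\tilde g_q(2y)^2+\tilde g_q(-2y)^2\big)\le1$. Thus Sec.\ \ref{sec:pulses} supplies angle strings $\boldsymbol\phi^{(c)}$ realising $\mathscr{B}$ in the $(0,0)$ entry of the $\ket{0}_{\mathcal A_P}$-sector operator, and $\boldsymbol\phi^{(s)}$ realising $\sin y\,\mathscr{D}$ as the relevant part of the off-diagonal entry of the $\ket{1}_{\mathcal A_P}$-sector operator, both classically computable in $\mathcal O(\mathrm{poly}(q))$. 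The role of $W_{\text{out}}$ is then to append the final $Z$-rotations $\phi_{q+1}^{(c)},\phi_{q+1}^{(s)}$ closing the two strings, with the extra $Z$ on the $\ket{1}_{\mathcal A_P}$ sector rotating that sector's off-diagonal sine entry into the block position; the $M$ on $\mathcal A_O$ in the two copies of $W_{\text{in}}$ performs the basis change that makes $\bra{0}_{\mathcal A_O}$ extract the corresponding matrix element, and the $M$ on $\mathcal A_P$ prepares and post-selects $\mathcal A_P$ in $\ket{+}$.

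Finally, projecting $\mathcal A_P$ onto $\bra{+}\cdots\ket{+}$ averages the two sectors with weight $\tfrac12$, so
\[
\bra{0}\bra{0}\,\boldsymbol V_{\boldsymbol\phi}\,\ket{0}\ket{0}=\sum_\lambda\tfrac12\big(\mathscr{E}(x_\lambda)+\mathscr{O}(x_\lambda)\big)\ketbra{\lambda}{\lambda}=\tfrac12\sum_\lambda\tilde g_q(x_\lambda)\ketbra{\lambda}{\lambda}=\tilde f_q[H],
\]
which is Eq.\ \eqref{eq:rtrealfunction2}, and the classical cost is dominated by computing the two QSP angle strings, i.e.\ $\mathcal O(\mathrm{poly}(q))$. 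I expect the only genuinely delicate part to be this last bookkeeping step: tracking how the Hadamards in $W_{\text{in}}$ (which also sit inside every $O'$) together with the final rotations in $W_{\text{out}}$ place the cosine sector's diagonal entry and the sine sector's off-diagonal entry into the same block, with matching signs and no residual imaginary parts or eigenvalue-dependent global phases. Everything else is the eigenspace reduction plus a direct appeal to the single-qubit results already recalled in Sec.\ \ref{sec:pulses}.
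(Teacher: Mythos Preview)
Your proposal is correct and follows essentially the same route as the paper: eigenspace reduction, recognising $O'$ as $e^{i(x_\lambda/2)X}$ up to the global phase $e^{-ix_\lambda/2}$ (which you note cancels in the alternating product), running the standard QSP sequence of Sec.\ \ref{sec:pulses} separately in the two $\mathcal A_P$ sectors, and averaging via the Hadamard on $\mathcal A_P$ to get the factor $\tfrac12$. The only cosmetic difference is the achievability check: the paper bounds the cosine and sine parts \emph{separately} via $|\mathscr{E}(x)|=|\tfrac12(\tilde g_q(x)+\tilde g_q(-x))|\le1$ and $|\mathscr{O}(x)|=|\tfrac12(\tilde g_q(x)-\tilde g_q(-x))|\le1$, which is exactly what is needed for two independent QSP strings (one with $\mathscr{D}=0$, one with $\mathscr{B}=0$); you instead verify the joint bound $\mathscr{E}^2+\mathscr{O}^2=\tfrac12(\tilde g_q^2(x)+\tilde g_q^2(-x))\le1$, which is stronger and certainly implies both separate bounds, so your argument goes through as well. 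Your flagged ``delicate bookkeeping'' is handled in the paper by computing $\bra{+}\boldsymbol{\mathcal V}_{\boldsymbol\phi^{(c)}}\ket{+}$ and $\bra{+}(\mathds 1\otimes Z)\boldsymbol{\mathcal V}_{\boldsymbol\phi^{(s)}}\ket{+}$ explicitly and then setting the unwanted (imaginary) component to zero in each sector---precisely the mechanism you anticipate.
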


\begin{proof}
 Consider the operator $\boldsymbol{\mathcal{V}}_{\boldsymbol{\phi}^{(j)}}=(\mathds{1}\otimes e^{i\phi_{q+1}^{(j)} {Z}})\bar{\mathcal{V}}_{\phi_q^{(j)}}{\mathcal{V}}_{\phi_{q-1}^{(j)}}\cdots\bar{\mathcal{V}}_{\phi_2^{(j)}}{\mathcal{V}}_{\phi_{1}^{(j)}}$ on $\mathbb{H}_{\mathcal{SA}_O}$, where $\boldsymbol{\phi}^{(j)}\in\mathbb{R}^{q+1}$ is the set of angles $\{\phi^{(j)}_{q+1},\cdots,\phi^{(j)}_{1}\}$, with $j=c$ or $j=s$. Using  Eq.\ \eqref{eq:epecOracle}, it is straightforward to see that 
\begin{equation}
\boldsymbol{\mathcal{V}}_{{\boldsymbol{\phi}^{(j)}}}=\sum_{\lambda}\ket{\lambda}\bra{\lambda}\otimes\mathcal{R}\left(\frac{x_{\lambda}}{2},{\boldsymbol{\phi}^{(j)}}\right),
\end{equation}
with the  concatenated qubit-rotations operator $\mathcal{R}\left(\frac{x_{\lambda}}{2},{\boldsymbol{\phi}}\right)$
given by Eq. \eqref{eq:poly}. Considering $\mathcal{A}_{O}$ to be initialized and
afterwards projected on state $\ket{+}$, the resulting operator on $\mathbb{H}_{\mathcal{S}}$
is obtained as 
\begin{equation}
\bra{+}\boldsymbol{\mathcal{V}}_{{\boldsymbol{\phi}^{(j)}}}\ket{+}=\sum_{\lambda}\left(\mathscr{\tilde{B}}(x_{\lambda})+i\mathscr{\tilde{D}}(x_{\lambda})\right)\ket{\lambda}\bra{\lambda}.\label{eq:cos}
\end{equation}
with $\mathscr{\tilde{B}}(x)=\text{Re}[B(\cos x)]$ and $\mathscr{\tilde{D}}(x)=\text{Re}[\sin x\,D(\cos x)]$.
By setting $\mathscr{\tilde{D}}(x)=0$, a real cosine series
on $x$ is obtained. A real sine series can also be implemented
through 
\begin{equation}
\bra{+}\left(\mathds{1}\otimes Z\right)\boldsymbol{\mathcal{V}}_{{\boldsymbol{\phi}^{(j)}}}\ket{+}=\sum_{\lambda}\left(i\mathscr{\bar{B}}(x_{\lambda})+\mathscr{\bar{D}}(x_{\lambda})\right)\ket{\lambda}\bra{\lambda}\label{eq:sin}
\end{equation}
by setting $\mathscr{\bar{B}}(x)=0$. Here we defined $\mathscr{\bar{B}}(x)=\text{Im}[B(\cos x)]$
and $\mathscr{\bar{D}}(x)=-\text{Im}[\sin x\,D(\cos x)]$.
Moreover, adding an extra control qubit $\mathcal{A}_{P}$ allows to perform a combination
of them by applying $\tilde{\boldsymbol{V}} _{\boldsymbol{\phi}}=\boldsymbol{\mathcal{V}}_{{\boldsymbol{\phi}^{(c)}}}\otimes \ket{0}\bra{0}_{\mathcal{A}_{P}}+\left[\left(\mathds{1}\otimes Z\right)\boldsymbol{\mathcal{V}}_{{\boldsymbol{\phi}^{(s)}}}\right]\otimes \ket{1}\bra{1}_{\mathcal{A}_{P}}$
since 
\begin{equation}
_{\mathcal{A}_{P}}\bra{+}\tilde{\boldsymbol{V}} _{\boldsymbol{\phi}}\ket{+}_{\mathcal{A}_{P}}=\frac{1}{2}\left(\boldsymbol{\mathcal{V}}_{{\boldsymbol{\phi}^{(c)}}}+\left(\mathds{1}\otimes Z\right)\boldsymbol{\mathcal{V}}_{{\boldsymbol{\phi}^{(s)}}}\right).\label{eq:four}
\end{equation}
In order to meet the achievability conditions, both cosine and sine
series must be normalized according to Eq.\ \eqref{achievabilityCond1}. 
Since  $\left|\tilde{g}_{q}(x)\right|<1$ for all $x\in[-\pi,\pi]$ then 
and
\begin{equation}\label{sn}
\begin{split}
\left|\sum_{k=0}^{q/2}b_k\,\cos k x\right|=&\left|\frac{\tilde{g}_{q}(x)+\tilde{g}_{q}(-x)}{2}\right| \leq1\\
 \left|\sum_{k=1}^{q/2}d_{k}\,\sin k x\right|=&\left|\frac{\tilde{g}_{q}(x)-\tilde{g}_{q}(-x)}{2}\right|\leq1
\end{split}
\end{equation}
such that the achievability
conditions of QSP are satisfied by the cosine and sine series individually. Therefore, there are two sets of angles ${\boldsymbol{\phi}}^{(c)}$ and
${\boldsymbol{\phi}}^{(s)}$ that realize the sine and the cosine series through
the QSP operators \eqref{eq:cos} and \eqref{eq:sin}, respectively.
Notice that setting $\mathscr{\tilde{D}}(\theta)=0$ and $\mathscr{\bar{B}}(\theta)=0$
does not compromise the achievability conditions to be fulfilled. The final result in Eq.\ \eqref{eq:rtrealfunction2} comes from the observation that $\boldsymbol{V}_{{\boldsymbol{\phi}}}=W_{\text{in}}\tilde{\boldsymbol{V}}_{{\boldsymbol{\phi}}}W_{\text{in}}$. The ancillas unitary $W_{\text{in}}$ serves only to change the ancillas projection to the computational basis.

\end{proof}

\end{document}